  \newcommand \SubFigWidth {.47\columnwidth}
  \newcommand \SubFigWidth {0.95\columnwidth}
\DeclareMathAlphabet{\mathpzc}{OT1}{pzc}{m}{it}
\DeclareMathAlphabet{\mathcalligra}{T1}{calligra}{m}{n}
\newcommand{\norm}[1]{\left\Vert#1\right\Vert}
\newtheorem{thm}{Theorem}
\newtheorem{rmk}{\bf Remark} 
\begin{document}
\title{Secret Key Agreement with Large Antenna Arrays under the Pilot Contamination Attack}

%
%
%
\author{Sanghun~Im,~\IEEEmembership{Student Member,~IEEE,}
        Hyoungsuk~Jeon,~\IEEEmembership{Member,~IEEE,}
        Jinho~Choi,~\IEEEmembership{Senior Member,~IEEE,}
        and~Jeongseok~Ha,~\IEEEmembership{Member,~IEEE}
\IEEEcompsocitemizethanks{
\IEEEcompsocthanksitem S. Im, H. Jeon, and J. Ha are with the Department of Electrical Engineering, Korea Advanced Institute of Science and Technology, Daejeon, Korea (e-mail: sh.im@kaist.ac.kr, h.jeon@kaist.ac.kr, jsha@kaist.edu).
\IEEEcompsocthanksitem J. Choi is with School of Information and Communications, Gwangju Institute of Science and Technology (GIST), Gwangju, Korea (e-mail: jchoi0114@gist.ac.kr).}}

\maketitle

\begin{abstract}
We present a secret key agreement (SKA) protocol for a multi-user 
time-division duplex system where a base-station (BS) with 
a large antenna array (LAA) shares secret keys with users 
in the presence of non-colluding eavesdroppers. 
In the system, when the BS transmits random sequences 
to legitimate users for sharing common randomness, the eavesdroppers 
can attempt the pilot contamination attack (PCA) in which each 
of eavesdroppers transmits its target 
user's training sequence in hopes of 
acquiring possible information leak 
by steering beam towards the eavesdropper. 
We show that there exists a crucial
complementary relation between the received signal 
strengths at the eavesdropper and its target user.
This relation tells us that
the eavesdropper inevitably leaves a trace that
enables us to devise a 
way of measuring the amount of information leakage to 
the eavesdropper even if PCA parameters are unknown. 
To this end, we derive an estimator for the channel gain 
from the BS to the eavesdropper and propose a rate-adaptation 
scheme for adjusting the length of secret key under the PCA. 
Extensive analysis 
and evaluations are carried out under various setups, which 
show that the proposed scheme adequately takes advantage of 
the LAA to establish the secret keys under the PCA.
\end{abstract}

\begin{IEEEkeywords}
Channel estimation, information leakage, large antenna arrays, multi-user, pilot contamination attack, secret key generation, time division duplex.
\end{IEEEkeywords}

\ifCLASSOPTIONonecolumn
  \pagenumbering{arabic}
\fi

\section{Introduction}\label{Sec:Intro}
The broadcast nature of wireless medium makes wireless communications especially vulnerable to various security threats such as eavesdropping, impersonating, and message modification. However, by establishing secret keys between legitimate terminals through a secret-key agreement (SKA) protocol, such threats can be efficiently nullified. To this end, information theoretic approaches \cite{Wyner75The-Wire-Tap:A, Ozarow84Wire-tap:A, Csiszar78Broadcast:A, Maurer_1993, Ahlswede_1993} for the SKA have been proposed and extensively studied. In \cite{Wyner75The-Wire-Tap:A}, Wyner considered a scenario called the wiretap channel in which an eavesdropper listens in on communications between legitimate terminals over a noisier channel than the one between legitimate terminals. In the seminar work, it was shown that a pair of legitimate terminals can share a secret key in total ignorance of the eavesdropper with group codes. Later, Csi\'{z}ar and K\"{o}rner \cite{Csiszar78Broadcast:A} generalized the Wyner's original work in \cite{Wyner75The-Wire-Tap:A}.  Motivated by the results, a great deal of research has been conducted on SKA over various types of wiretap channels \cite{Bloch2006LDPC-based:A, Andersson2012Secret-key:A, Koyluoglu10Polar:A, Wong11Secret-Sharing:A, Song12Perfect:A}. However, such SKA schemes seem impractical as most of studies on code design for wiretap channels are limited to the cases with asymptotically long block lengths \cite{Thangaraj07Applications:A, Mahdavifar11Achieving:A, Andersson10Nested:A} and the assumption of noisier eavesdropper's channel may not be guaranteed in many cases. In addition, to determine secrecy rates, eavesdropper's channel quality and/or statistics must be a priori known.

Meanwhile, it was shown in \cite{Maurer_1993} that a secret key can be shared with public discussion even if an eavesdropper has a better channel provided the legitimate terminals have knowledge of the eavesdropper's channel quality. In addition, a practical sequential SKA protocol was introduced by Maurer. The scheme is designed to sequentially perform the following three phases: the 
advantage distillation \cite{Maurer_1993}, information reconciliation \cite{Brassard94Secret-Key:A}, and privacy amplification \cite{Bennett95Generalized:A} phases.  The first phase, i.e. the advantage distillation, enables the legitimate terminals to share correlated random sequences with a higher correlation than the one an eavesdropper acquires even under the condition that neither of the legitimate terminals has an advantage compared to the eavesdropper. In the information reconciliation phase, the legitimate terminals make their correlated random sequences identical by exchanging information over public channel. Finally, each legitimate terminal independently performs the privacy amplification on the identical random sequence to generate a secret key  which the eavesdropper is completely ignorant of. The SKA scheme is adopted in the quantum key distribution (QKD) protocol \cite{Bennett84Quantum:A, Ekert91Quantum:A} in which the quantum entanglement is utilized to detect possible eavesdropping of the randomness sharing between two legitimate terminals. Due to the inherent advantage compared to the eavesdropper, the randomness sharing in the QKD protocol can be performed without the advantage distillation phase. After the randomness sharing, the protocol performs the information reconciliation and the privacy amplification. Meanwhile, in wireless communications, there have been a few notable efforts \cite{Liu12Exploiting:A, Ye10Information-Theoretically:A, Ren11Secret:A} 
to realize the randomness sharing by exploiting wireless channel reciprocity.


Recently, cellular systems with large antenna arrays
(LAAs\footnote{By the LAA, we mean 
a BS's antenna array of a number of 
antenna elements and this number is usually 
much larger than that of users in the cell under the service of 
the BS.}) have been extensively studied due to their attractive features 
\cite{Marzetta_1999, Marzetta_2010, Jose_2011, Khisti_2010, Xiong_2012, Im13Secret}. On one hand, a high spectral efficiency can be achievable
since small-scale fading and intra-cell interference can be efficiently mitigated by the LAA \cite{Marzetta_1999, Marzetta_2010, Jose_2011}. On the other hand, from a security point of view, the LAA is especially advantageous in the sense that a narrower beam formed by the LAA makes the reception at the passive eavesdropper significantly weakened. Thus, the secrecy rate of wiretap 
LAA channels grows with the number of transmit antennas \cite{Khisti_2010, Xiong_2012, Im13Secret, Geraci2012, Zhu2014}. 
Sum secrecy rates in multi-user MIMO system have been studied 
in \cite{Geraci2012}. Then, the research is further extended 
to a multi-cell setup \cite{Zhu2014}. These studies show that the LAA 
helps wireless systems to have an advantage over eavesdroppers, which is equivalent to performing the advantage distillation phase \cite{Bennett_1995} in the randomness sharing.

This work considers an SKA protocol for the system with LAA 
based on the sequential three-phase protocol. In particular, the randomness sharing is carried out in such a way that the 
base-station (BS)
first acquires a collection of channel state information (CSI) 
between the BS
and the multiple users from the
receptions of orthogonal training sequences simultaneously sent by the users during a fraction of coherence block. Then, the channel reciprocity \cite{Guillaud_2005} enables the BS to make a precoding vector for the subsequent downlink data transmission to each of users equipped with a single receive antenna. The BS transmits different random sequences weighted by precoding vectors to the legitimate users during the remaining fraction of the coherence time. The time-division duplex (TDD) mode
significantly reduces the channel estimation overhead \cite{Marzetta_1999}.

In this work, we assume that uncoded random sequences are transmitted 
and allows some errors to happen over the transmissions. After the downlink transmissions, the BS and users end up having correlated random sequences 
that are not necessarily identical due to possible errors in the received random sequences. However, the advantage associated with LAA enables the shared random sequences between the BS and legitimate users to have higher correlations than the ones at the passive eavesdroppers when the number of antennas at the 
BS is sufficiently large. Thus, by performing the subsequent information reconciliation and privacy amplification phases, the BS and 
users can have identical secret keys in the end. The key agreement protocol we consider in this work can be understood within the theoretical framework of \emph{sender excited} model \cite{Chou14, Wang14}.

However, recently, a serious security weakness of the SKA protocol was discussed by Zhou \emph{et al} in \cite{Zhou_2012} where it was pointed out that the precoding for legitimate users in the LAA-based TDD system is solely determined by CSI estimates based on the uplink training sequences which are exposed to active attacks. As a potential attack, the authors in  \cite{Zhou_2012} studied an active eavesdropping attack called \emph{pilot contamination attack} (PCA) in which an eavesdropper transmits the same pilot sequence as the one from a target user for the purpose of tilting the direction of beam towards 
the eavesdroppers. In particular, by contaminating the pilot sequence, the eavesdropper deceives the BS to make a precoding vector which steers the beam direction from the target user towards 
the eavesdropper, and the information sent by the BS leaks to the eavesdropper. Since the PCA was first introduced in \cite{Zhou_2012}, countermeasures to protect wireless communications from the PCA have not been well investigated, which motives our work. There have been efforts to detect the PCA in \cite{Im13Secret, Kapetanovic13Detection:A} among which the authors in \cite{Im13Secret} studied a PCA detector and an estimator of the eavesdropper's channel. While the detector utilizes statistics of the received signals at both BS and the target user, the estimator is based only on the ones at the BS side. Meanwhile, the work in \cite{Kapetanovic13Detection:A} proposed a PCA detection technique which employs random pilots from a set of phase-shift keying symbols. In \cite{Im13SKA}, 
an SKA protocol under potential PCA was proposed and evaluated. However, the SKA protocol in \cite{Im13SKA} is inefficient in the sense that the protocol requires multiple coherence blocks and simply discards suspicious packets.

In this paper, based on the three-phase sequential protocol we propose a modified SKA protocol tailored for nullifying the PCA with assumptions: 1) multiple non-colluding eavesdroppers equipped with a single antenna attempt the PCA to their own targets \cite{Xiong12AClosedForm, Romero-Zurita12Outage, Pinto12SecureII}, 2) the BS and legitimate users do not have any prior knowledge 
of the eavesdroppers such as the number of eavesdroppers in the network, their locations, and their transmit powers used in the PCA, and 3) other cells fully cooperate in a way not to use the training sequences of the users in the process of SKA. Thus, the pilot contaminations to the users in the SKA process come only from the active eavesdroppers employing the PCA if any. The cooperation with other cells can be justified since the SKA session needs to be rarely performed as compared to data transmission. Thus, the system throughput degradation due to the cooperation could be negligible. 

The sequential SKA protocol enables us to establish a secret key between legitimate parities even under the PCA if the SKA scheme has knowledge of how much information the active eavesdroppers have gained about the random sequences transmitted by the BS. However, unfortunately, such knowledge is not available.
To overcome the technical challenge, the standard sequential SKA protocol must be modified by 
introducing
a mechanism to estimate the amount of information leakage.

The main idea of the proposed SKA is inspired by the QKD protocol \cite{Bennett84Quantum, B92, E91}. The security of QKD is based on 
the principles of quantum mechanics, 
the no-cloning theorem \cite{Wootters82A-single:A} for example, implying that eavesdropper cannot overhear qubits transmitted from a transmitter to a receiver without introducing detectable anomalies. In the protocol BB84 \cite{Bennett84Quantum}, the legitimate terminals discuss a certain subset of their measurement results to detect the presence of eavesdropping. When eavesdropping is detected, the random sequence obtained from this 
session is discarded. We find that the wireless system with LAA has a similar property with which the presence of eavesdropper can be detected. That is, there is a complementary relation between the received signal strengths at the target user and eavesdropper. Once an eavesdropper attempts the 
PCA on a target user, the received signal strength at the target user becomes weaker than the one expected since the beam formed by the BS for the target user is partially steered towards the eavesdropper. Thus, the stronger the PCA, the wider gap between the signal strength measured at the target user and the one expected. In this paper, based on this relation, we 
derive an estimator to measure the CSI between the BS and 
eavesdroppers since the CSI is directly proportional to the amount of the information leakage. Then, the BS and the legitimate users can adjust the lengths of secret keys according to the estimated amount of information leakage contrary to the protocol BB84 \cite{Bennett84Quantum} where the generated secret keys are discarded when eavesdropping is suspicious. Performances of the proposed scheme are evaluated by conducting comparisons in numerical and analytic ways in various environments. The main contributions of this paper is summarized as follows: 
\begin{enumerate}
  \item We first introduce the proposed SKA protocol to defend wireless systems with LAA against the PCA. The impact of the PCA on the 
performance of the SKA scheme is analyzed in terms of average signal-to-interference-plus-noise power ratios (SINRs) at the target user and eavesdropper. The analysis results clearly show the complementary relation between the average SINRs at the target user and eavesdropper.
  
  \item Based on the complementary relation, 
we derive an estimator for the purpose of estimating the eavesdroppers' channels, i.e. the ones between the BS and eavesdroppers. It will be shown that the estimation results can be utilized for 
estimating the amount of information leakage during the randomness sharing.  
  
  \item We evaluate average secret key lengths when the 
estimate of information leakage is provided to the sequential SKA protocol which adaptively determines the length of resulting secret key.
Performance evaluations show that the secrecy outage probability, i.e. the probability not to achieve perfect secrecy, decreases exponentially fast with 
the number of antennas. In addition, it will be shown that 
a stronger PCA ironically results in a better system performance, i.e. 
a lower secrecy outage probability due to the complementary relation.
  \item Comprehensive performance evaluations are carried out to see trade-offs between the outage probability and average secret key length with different combinations of system parameters, such as the number of users, the number of antennas at the BS, and the lengths of random sequences. The results of performance evaluations enable the system designer to choose appropriate parameters to meet various system requirements.
\end{enumerate}

The rest of the paper is organized as follows. In Section \ref{Sec:SKA}, we describe the scenario under investigation which includes the proposed SKA protocol, an adversary model, and channel models. In Section \ref{Sec:EstInfLeak}, the complementary relation of the received signal strengths is analyzed by investigating average SINRs at the legitimate user and eavesdropper. Based on the relation, an estimation scheme for the amount of information leakage to eavesdropper is proposed and analyzed. In Section \ref{Sec:Results}, performance evaluations for the proposed estimation scheme are carried out in terms of a normalized mean-square error. In addition, the secrecy outage probability and the average length of secret key are extensively evaluated with various combinations of design parameters. Finally, we make conclusions in Section \ref{Sec:Con}.

\emph{Notation:} 
Bold face upper and lower case letters are used to denote matrices and vectors, respectively. Transpose and Hermitian are denoted by $(\cdot)^T$ and $(\cdot)^\dag$, respectively. We use $[x]^+$ for $\max\{0,x\}$. $\mathbf{0}_M$  and $\mathbf{I}_M$ denote the $M \times 1$ all zero vector and $M \times M$ identity matrix, respectively. We use $\norm \cdot$ for the Euclidean vector norm. A probability density function (pdf) and a conditional pdf are denoted by $f(\cdot)$ and $f(\cdot|\cdot)$, respectively. When the pdf and the conditional pdf are parameterized by an unknown parameter $\theta$, they are denoted by $f(\cdot ; \theta)$ and $f( \cdot|\cdot ; \theta)$, respectively.  
%
%
\section{Secret Key Agreement Scheme} \label{Sec:SKA}
\subsection{System model}
\begin{figure*}[t]
	\centering
		\subfigure[Pilot contamination attack for $K=2$ and $K_e=1$]
		{
			\includegraphics[width=0.45\textwidth]{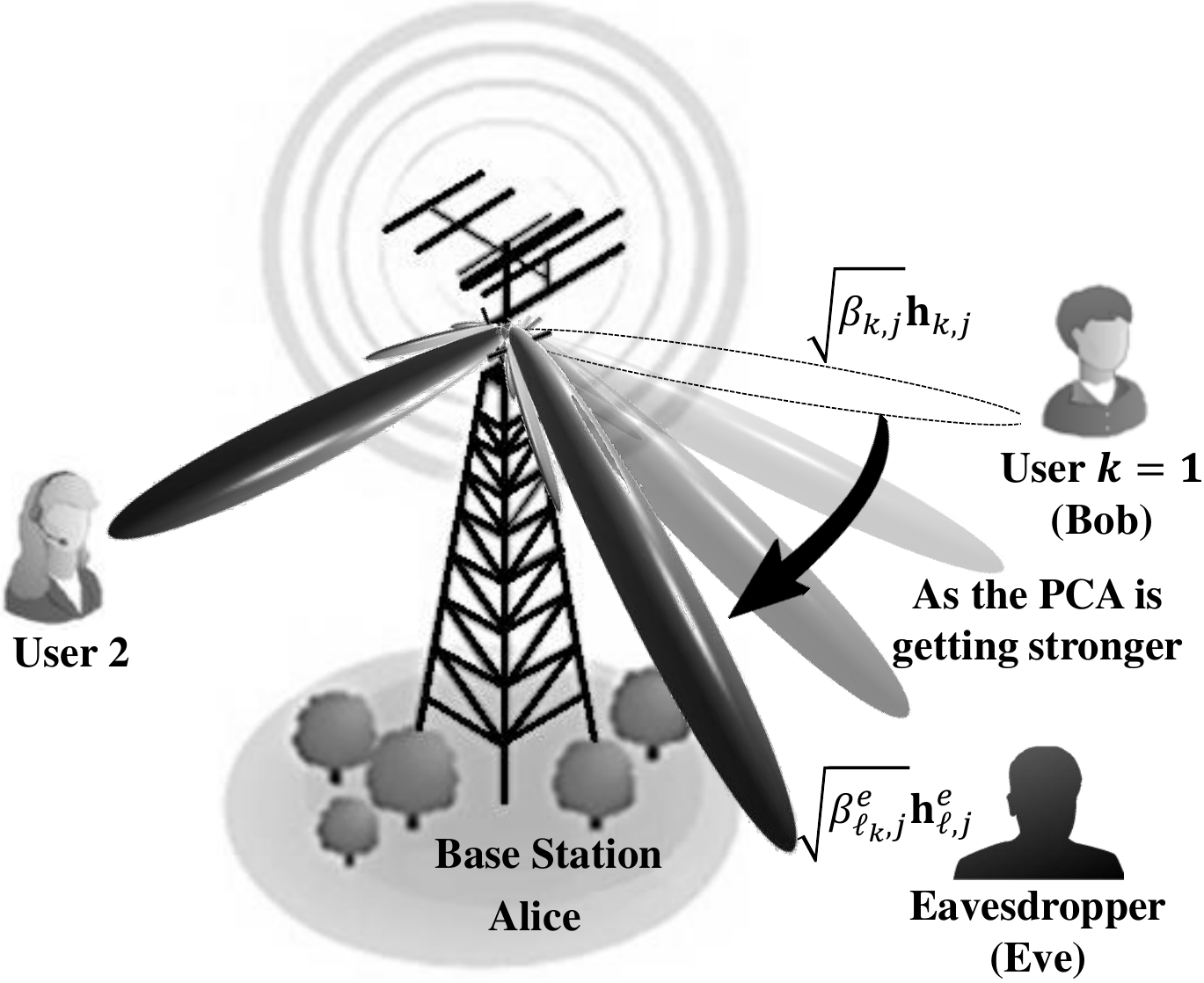}
			\label{Fg:Cell}
		}
		\subfigure[Secret key generation protocol]
		{
			\includegraphics[width=0.5\textwidth]{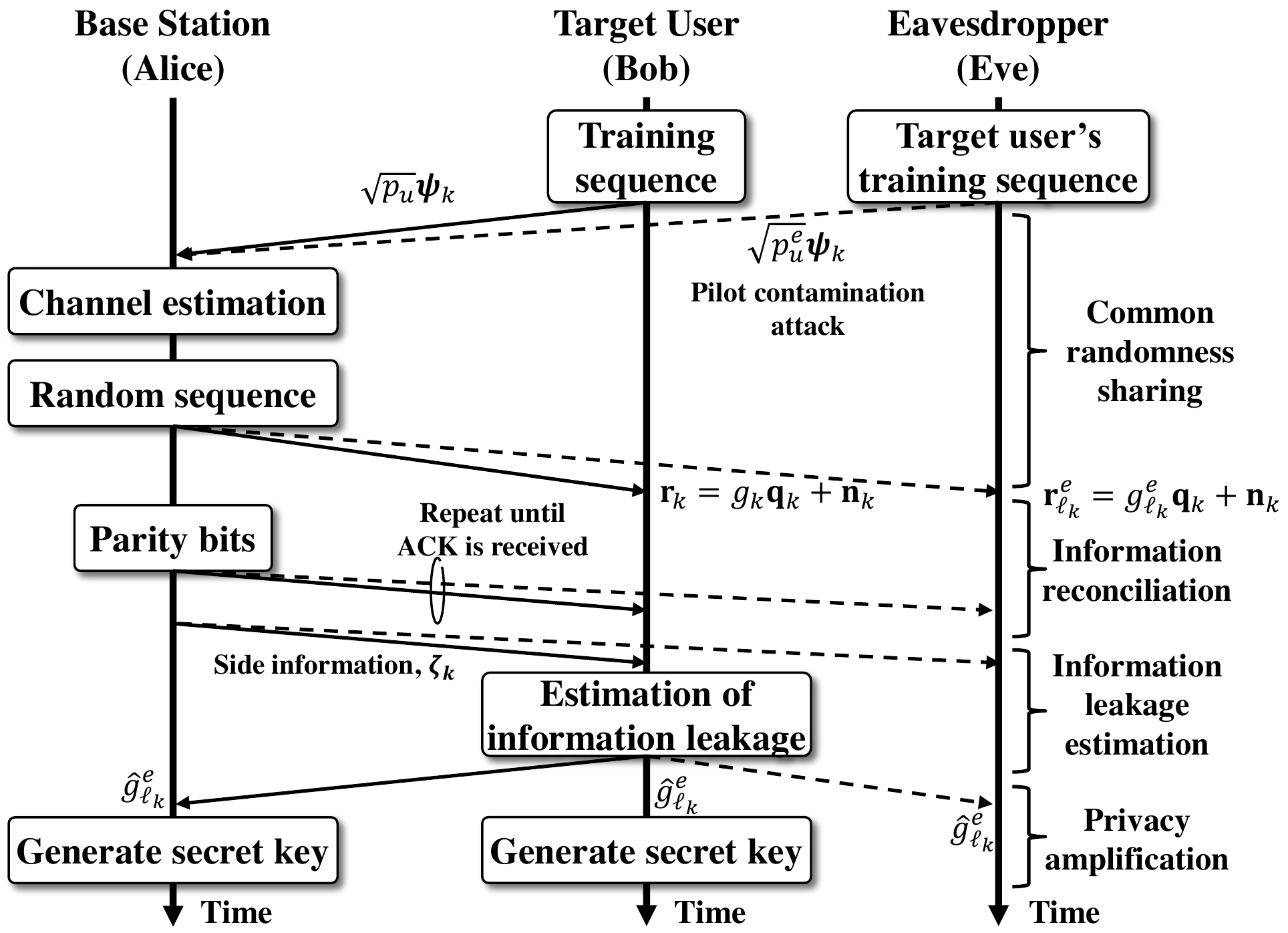}
			\label{Fg:SKA}
		}		
	\caption{System model}
	\label{Fg:Fig1}
\end{figure*}

We consider a TDD-based cellular system where a base station in each cell, called \emph{Alice}, aims at establishing different secret keys with $K$ legitimate users in the presence of $K_e$ active eavesdroppers. 
Alice has an array of a large number of antennas, say $M$ antennas, while each of $K$ users has a single antenna. Fig. \ref{Fg:Cell} illustrates an example where the $k$-th legitimate user is in the SKA session over a wireless channel, and the $\ell_k$-th eavesdropper attempts the PCA to the $k$-th legitimate user over a different wireless channel. 

The wireless channels in this work experience
both small-scale and large-scale fading. In Fig. \ref{Fg:Cell}, the channel realization at the $j$-th coherence block between Alice and the $k$-th user is given by $\sqrt{\beta_{k, j}}\mathbf{h}_{k, j}$.  
Here, $\sqrt{\beta_{k, j}}$ accounts for the nonnegative large-scale fading factor determined by path-loss and shadowing, which are slowly varying over time, while $\mathbf{h}_{k, j}$ is an $M \times 1$ vector representing the small-scale fading and varying faster than the large-scale fading factors. We assume that the large-scale fading factors, $\sqrt{\beta_{k, j}}$'s for $1 \le k \le K$ are public information a priori known to everyone including eavesdroppers. Meanwhile, it is assumed that the small-scale fading factor, $\mathbf{h}_{k, j}$'s follow $\mathcal{CN}(\mathbf{0}_M,\mathbf{I}_M)$ and are statistically independent and identically distributed (i.i.d.). In addition, we assume block fading channels, i.e.  $\mathbf{h}_{k, j}$'s are static over a coherence block and i.i.d. across coherence blocks. The proposed SKA protocol is performed within one coherence block, and thus we will omit the coherence block index $j$ for simplicity throughout this paper. 
Contrary to the large scaling fading factors, 
only the statistical properties of the small scale fading factors
are known to everyone. Thus, each realization of $\mathbf{h}_k$ 
is unknown and must be estimated if needed. Similarly, 
the wireless channels between Alice and the eavesdroppers are 
modeled as $\sqrt{\beta^e_{\ell}}\mathbf{h}^e_{\ell}$, $\ell \in 
\{1,\ldots, K_e\}$, where $\sqrt{\beta^e_{\ell}}$ and $\mathbf{h}^e_{\ell}$ are the large- and small-scale fading factors, respectively. Note that the 
coherence block index $j$ is omitted as aforementioned. Contrary to the legitimate users' channels, it is assumed that the large scaling fading factors $\beta^e_\ell$ is known only to the eavesdroppers, i.e. not available to the legitimate users and Alice. 

For the cellular system, we consider the proposed SKA protocol summarized
in Fig. \ref{Fg:SKA}, which consists of 
common randomness sharing, information reconciliation, 
information leakage estimation and privacy amplifications. In this section, the  building blocks of the proposed scheme in Fig. \ref{Fg:SKA} are introduced in detail except for the information leakage estimation which will be discussed in Section \ref{Sec:EstInfLeak}.

\subsection{Common Randomness Sharing (CRS)}
\subsubsection{Uplink Training}
The CRS is initiated by users who want to establish secret keys. As the first step of the CRS, $K$ users simultaneously transmit orthonormal training sequences at the beginning of a coherence block  so that Alice can estimate CSI of each user, i.e. $\mathbf{h}_k$. In particular, the legitimate users transmit $\sqrt{p_u N_u} \boldsymbol{\psi}_k$ to Alice,
where $p_u$ is the uplink training power, $\boldsymbol{\psi}_k$ is a $1\times N_u$ binary orthonormal training sequence, i.e. $\boldsymbol{\psi}_k\boldsymbol{\psi}^{\dagger}_k=1$, $\boldsymbol{\psi}_k\boldsymbol{\psi}^{\dagger}_{\ell}=0$ for $k \neq \ell$, and $k$ and $\ell$ are in  $\mathcal{K} = \{1, 2, \ldots, K\}$. 
Here, $N_u$ (i.e., the length of orthonormal training sequences)
is usually larger than or equal to $K$.

Meanwhile, for the PCA, the eavesdroppers inject their target users' training sequences perfectly synchronized with the uplink training sequences originated from the legitimate users.  Then, the received signal at Alice becomes
\ifCLASSOPTIONonecolumn 
\begin{equation}\label{eq:Y}
	\mathbf{Y} = 
	\sum_{k\in\mathcal{K}} \sqrt{p_u \beta_k N_u}  \mathbf{h}_{k}  \boldsymbol{\psi}_k + \sum_{\ell \in \mathcal{E}} \sqrt{p^e_\ell \beta^e_\ell N_u} \mathbf{h}^e_\ell \boldsymbol{\psi}_\ell +  \mathbf{U}, 
\end{equation}	
\fi
\ifCLASSOPTIONtwocolumn 
\begin{align}\label{eq:Y}
	\mathbf{Y} = 
	\sum_{k\in\mathcal{K}} \sqrt{p_u \beta_k N_u}  \mathbf{h}_{k}  \boldsymbol{\psi}_k + \sum_{\ell \in \mathcal{E}} \sqrt{p^e_\ell \beta^e_\ell N_u} \mathbf{h}^e_\ell \boldsymbol{\psi}_\ell +  \mathbf{U}, 
\end{align}	\fi
where $\mathcal{E} = \{1, \ldots, K_e\}$ is 
the index set of eavesdroppers, $p^e_\ell$ is the PCA power of the $\ell$-th eavesdropper, $\mathbf{U}$ is an $M \times N_u$ noise matrix in which each entry is independent zero-mean circularly-symmetric complex Gaussian (CSCG) with unit variance.

Since we assume multiple non-colluding eavesdroppers in a cell, it is conceivable that multiple eavesdroppers may perform the PCA to a target user. This is however of no benefit to the non-colluding eavesdroppers since they are in a competition to pull the user beam towards them, and thereby the amount of information leaked to each eavesdropper decreases. Furthermore, if an eavesdropper performs the PCA on multiple users simultaneously, it will receive a superposition of multiple
signals and its eavesdropping performance becomes interference-limited as when a signal to a user is to be detected or decoded, the other signals (to the other users) become interfering signals.
 
Thus, throughout this paper, we consider the best case scenario for eavesdroppers as follows:
\begin{itemize}
  \item An eavesdropper does not attempt the PCA targeting more than one user 
at a time, and 
  \item A user is not targeted by more than one eavesdropper at a time.
\end{itemize}
The assumptions guarantee that there are at most $K$ eavesdroppers in a cell, i.e. $K_e \le K$, and each of them has its unique target. In this work, without loss of generality, we assume that $\mathcal{K} = \mathcal{E}$, i.e. $K_e = K$ and the $k$-th legitimate user is attacked by the $k$-th eavesdropper. Hereafter, the $k$-th user and the $k$-th eavesdropper are called Bob and Eve for short when there is no risk of confusion.

Since Alice does not know the CSI for the legitimate users, 
she has to estimate the CSI based on the received signal, $\mathbf{Y}$. Due to the orthonormality, $\mathbf{y}_k = \mathbf{Y} \boldsymbol{\psi}^\dag_k$ is a sufficient statistic for estimating the CSI for Bob, i.e. $\mathbf{h}_k$, 
which is expressed as
\begin{equation} \label{eq:y_k}
  \mathbf{y}_k = \sqrt{c_k} (\mathbf{h}_k + w_k \mathbf{h}^e_k) + \mathbf{u}_k, \quad \text{for } k \in \mathcal{K},
\end{equation} 
where $c_k = p_u \beta_k N_u$, $\mathbf{u}_k = \mathbf{U} \boldsymbol{\psi}_k^\dag$, and
\begin{equation} \label{eq:w_k}
  w_k =  \sqrt{\frac{p^e_k \beta^e_k}{p_u \beta_k}}, \text{ for } k \in \mathcal{K}.
\end{equation}
In \eqref{eq:w_k}, $ w_k \in [0, \infty)$\footnote{We exclude $w_k > 1$ since an eavesdropper considered in this paper aims to eavesdrop a secret key between Alice and a target without revealing its presence. Nevertheless, the eavesdropper may increase its uplink training power by $w_k > 1$. In this case, the attack can be detected by the target user immediately \cite{Im13Secret}, and Alice and the target user can avoid such an attack by establishing a new wireless channel to generate a secret key.} represents  \emph{the effective strength of the PCA} to Bob, and the case of $w_k = 0$ implies passive eavesdropping.
We employ a minimum mean-square-error (MSE), or MMSE estimator to estimate $\mathbf{h}_k$ \cite{Poor_1994, ChoiBook} which is 
\begin{equation}
  \hat{\mathbf{h}}_k = \frac{\sqrt{c_k}}{ 1+ \left(1+w_k^2 \right) c_k } \mathbf{y}_k.
  \label{eq:estimate_h_k}
\end{equation}
Note that the estimator in \eqref{eq:estimate_h_k} requires the knowledge of $w_k$ that is, however, not available to Alice. 
Thus, when Alice is not aware of the PCA, she has the estimate, $\hat{\mathbf{h}}_k$ in \eqref{eq:estimate_h_k} with $w_k=0$, which becomes
\[
  \hat{\mathbf{h}}_k = \mathbf{y}_k \sqrt{c_k}/(1 +  c_k).
\]

\subsubsection{Downlink Transmission}
In the downlink transmission, Alice generates $K$ binary random sequences
of length $N_b$, denoted by
$\mathbf{b}_k = [b_{k, 1}, \ldots, b_{k, N_b}]^T$ for $k \in \mathcal{K}$,
which are then mapped into a modulated sequence of length $N_d$,
denoted by
$\mathbf{q}_k = [q_{k, 1}, \ldots, q_{k, N_d}]^T$, 
where $q_{k, j} \in \mathbb{C}$, $1 \le j \le N_d$. 
Alice simultaneously sends 
$\mathbf{q}_k$ weighted by precoding vectors in the form of $\sqrt{p_d} \sum_{k=1}^K \mathbf{a}_k \mathbf{q}_k^T$ for all $k \in \mathcal{K}$, where $p_d$ is the downlink transmission power, 
and $\mathbf{a}_k $ is an 
$M \times 1$ precoding vector for Bob. The average power of 
$\mathbf{q}_k$ is normalized to be
$\frac{1}{N_d} \mathbb{E} [||\mathbf{q}_{k}||^2]=1$.
The precoding vector $\mathbf{a}_k$ is determined as a function of $\hat{\mathbf{h}}_k$, i.e $\mathbf{a}_k= \varphi ( \hat{\mathbf{h}}_k)$ where $\varphi (\cdot)$ is a precoding vector generating function that could be chosen in different ways. 
In this paper, we consider the matched filter (MF) precoding to gain more insights into the proposed system\footnote{In general, two linear precoding schemes, MF and MMSE precoding methods, are of practical interest \cite{Poor_1994}. Our main results are also valid with the MMSE precoding, but its complicated expression may make it harder to understand essentials.}:
\begin{align}\label{eq:MF_precoder}
	\mathbf{a}_k &=\frac{\hat{\mathbf{h}}_k}{ \|{\hat{\mathbf{h}}_k} \| },  
\quad \text{for } k \in \mathcal{K}.
\end{align}

Meanwhile, at the receiver side, without loss of generality, we assume that the received signal of each user is normalized by $\sqrt{p_d \beta_k M}$ for $k \in \mathcal{K}$. 
Then, the normalized received signal vector for Bob is given by 
\begin{align}
	{\mathbf{r}}_k &= \left( \frac{{ \mathbf{h}_k^T \mathbf{a}_k }}{\sqrt{M}}  \right) \mathbf{q}_k+\sum_{\ell \neq k} \left( \frac{{ \mathbf{h}_k^ T \mathbf{a}_\ell  }  }{\sqrt{M}} \right) \mathbf{q}_\ell+ \mathbf{z}_k  \,\, \text{for } k,\ell \in \mathcal{K}, \label{eq:r_Bob}
\end{align}
where $\mathbf{z}_k$ is an $N_d \times 1$ zero-mean CSCG noise vector 
with covariance matrix $(p_d \beta_k M)^{-1} \mathbf{I}_N$. 
We define an \emph{effective downlink channel gain} (EDCG) from Alice to Bob as $g_k =\frac{1}{\sqrt{M}}{\mathbf{h}_k^T \mathbf{a}_k}$ which leads to the following simplified expression of $\mathbf{r}_k$:
\begin{align}
	{\mathbf{r}}_k=  g_k \mathbf{q}_k+ \mathbf{n}_k \quad \text{for } k \in \mathcal{K}, \label{eq:r_Bob2}
\end{align}
where $\mathbf{n}_k = \sum_{\ell \neq k} (\frac{{\mathbf{h}_k^ T \mathbf{a}_\ell} }{\sqrt{M}}) \mathbf{q}_k+ \mathbf{z}_k$.
In Appendix \ref{App:A}, we show that $\mathbf{h}_k^ T \mathbf{a}_\ell$ for $k \neq \ell$ follows $\mathcal{CN}(0,1)$. 
Thus, the last term, $\mathbf{n}_k$ follows $\mathcal{CN} (\mathbf{0}_{N_d}, \sigma_{n_k}^2 \mathbf{I}_{N_d})$ where 
\begin{equation}
\sigma_{n_k}^2 = \frac{1}{M}\left(K - 1 + \frac{1}{p_d\beta_k}\right).
	\label{EQ:sn}
\end{equation}

Following a similar approach, we have the normalized received signal vector for Eve as 
\begin{align}\label{eq:r_Eve}
	\mathbf{r}^e_k &=  \left( \frac{{{\mathbf{h}^e_k}^T \mathbf{a}_k}}{\sqrt{M}} \right) \mathbf{q}_k + \sum_{\ell \neq k} \left( \frac{{{\mathbf{h}^e_k}^ T \mathbf{a}_\ell}}{\sqrt{M}} \right) \mathbf{q}_\ell \quad \text{for } k, \ell \in \mathcal{K}.
\end{align}
Note that no noise is assumed in the normalized received signal vector for Eve, which leads to an unfavorite 
scenario to Bob and Alice. 
Thus, the sum of interference due to the 
downlink signals to the other users is only the factor 
to impair recovering $\mathbf{q}_k$ from $\mathbf{r}^e_k$. In addition, the assumption allows us not to care for the  the locations of the eavesdroppers. The received signal at Eve, $\mathbf{r}^e_k$ can also be simplified to 
\begin{align}
	\mathbf{r}^e_k & = g^e_k \mathbf{q}_k+ \mathbf{n}^e_k  \quad \text{for } k \in \mathcal{E},\label{eq:r_Eve2}
\end{align}
where $g^e_k =\frac{1}{\sqrt{M}}{{\mathbf{h}^e_k}^T \mathbf{a}_k}$ is the EDCG from Alice to Eve, and $\mathbf{n}^e_k = \sum_{k \neq \ell} ( \frac{{{\mathbf{h}^e_k}^ T \mathbf{a}_\ell} }{\sqrt{M}}  ) \mathbf{q}_\ell$ whose distribution is given by $\mathcal{CN}(\mathbf{0}_{N_d}, \sigma_{n_{e,k}}^2 \mathbf{I}_{N_d})$ with $\sigma_{n_{e,k}}^2 = (K - 1)/M$. In Fig. \ref{Fg:SKA}, the received signals $\mathbf{r}_k$ and $\mathbf{r}^e_k$ at Bob and Eve, respectively, are depicted as the results of the CRS.


\subsection{Information Reconciliation and Privacy Amplification}

After the CRS, secret keys are extracted from the shared randomness 
through the information reconciliation and 
privacy amplification phases, which requires knowledge of information leakage 
to eavesdroppers by the PCA. However,  we  assume that Alice and the $K$ users have no information about eavesdroppers. In this section, we first review the information reconciliation and privacy amplification phases shown in Fig. \ref{Fg:SKA} and then discuss how such missing information hinders the phases from generating secret keys.

In the information reconciliation phase, 
Alice sends parity bits  over an \emph{authenticated public channel} to the users who need to correct errors occurred in the CRS. 
In the CRS, Alice transmits to Bob a sequence $\mathbf{q}_k$ 
that arrives at Bob and Eve who have received sequences  $\mathbf{r}_k = (r_{k,1} , \ldots , r_{k,N_d})$ and $\mathbf{r}^e_k = (r^e_{k,1}, \ldots, r^e_{k,N_d})$, respectively. After the randomness sharing, Alice and Bob can have shared information which amounts to $I(\mathbf{Q}_k; \mathbf{R}_k)$ where $\mathbf{Q}_k$ and $\mathbf{R}_k$ are random vectors corresponding to the realizations $\mathbf{q}_k$ and $\mathbf{r}_k$, respectively. The uncertainty between $\mathbf{Q}_k$ and $\mathbf{R}_k$ must be resolved to make them identical in the information reconciliation phase. According to the Slepian-Wolf theorem \cite{Slepian73Noiseless}, the information reconciliation requires at least $\nu_k = H(\mathbf{Q}_k | \mathbf{R}_k) = H (\mathbf{Q}_k) - I (\mathbf{Q}_k ; \mathbf{R}_k)$ bit exchanges over 
the public channel. 
While either Alice or Bob can generate and transmit the parity bits, this work assumes that Alice sends the parity bits since Alice already has the estimated CSI. Then, Alice and Bob can have an identical sequence, $\mathbf{q}_k$ which turns into the binary sequence, $\mathbf{b}_k$ of length $N_b = H (\mathbf{Q}_k)$. However, the sequence is not secure from eavesdropping due to the information leakage, denoted by $\mathbf{E}_k$, during the first two phases, i.e. the CRS and information reconciliation phases. Alice and Bob extract a secret key from $\mathbf{Q}_k$ by performing the privacy amplification process which eliminates the amount of eavesdropped information, $\mathbf{E}_k$, from $H(\mathbf{Q}_k)$. This can be done by using a hash function, $G_k \in\mathcal{G}: \{0, 1 \}^{N_b} \to \{0,1\}^{s_k}$, randomly chosen from a family of universal hash functions, $\mathcal{G}$ \cite{Carter79Universal} where $s_k$ is the length of secret key, $\mathbf{S}_k$. If we determine $s_k$ such that 
\begin{equation}
	s_k = \left[ N_d \{  I(Q_k; R_k) - I(Q_k; R^e_k) \} - 2a_k - 2 - b_k \right]^+, \label{eq:Key1}
\end{equation} 
where $Q_k$ and $R^e_k$ are the i.i.d. 
random variables representing the components in the random vectors, $\mathbf{Q}_k$ and $\mathbf{R}^e_k$, respectively, it is guaranteed that, for a sufficiently large $N_d$, the eavesdropper's uncertainty about the secret key,
denoted by $H(\mathbf{S}_k | G_k, \mathbf{E}_k)$, is bounded by
\begin{align}
	H (\mathbf{S}_k | G_k, \mathbf{E}_k) \ge s_k - \frac{2^{-b_k}}{\ln 2} \text{ with probability } 1-2^{-a_k}. \label{eq:Key2} 
\end{align}
By appropriately choosing $a_k$ and $b_k$, we can obtain a sufficiently long secret key with the eavesdropper's uncertainty 
that can be arbitrarily close to $s_k$, which implies perfect secrecy. 

However, the standard sequential SKA protocol cannot be directly applied to our scenario due to the assumption, i.e no prior knowledge 
about the eavesdroppers. 
In particular, the problems are as follows: 
\begin{enumerate}
	\item Unknown $\nu_k$: The required $\nu_k$ for the information reconciliation is derived from $I(Q_k ; R_{k} ) $ which is a function of CSI between Alice and Bob. However, once the training sequence sent by Bob is contaminated by the PCA, Alice would have a poor
estimate of the CSI, and its corresponding $\nu_k$ may not be enough for Bob to decode $\mathbf{q}_k$.
	\item Unknown $I(Q_k; R^e_k)$: Due to unknown CSI between Alice and Eve, Alice and Bob cannot measure $I(Q_k ; R^e_k)$. Thus, the standard sequential SKA protocol cannot determine the length of secret key in \eqref{eq:Key1} to achieve perfect secrecy.
\end{enumerate}

The first problem of unknown $\nu_k$ can be resolved by exchanging additional messages between Alice and Bob. For example, if a rateless Slepian-Wolf code is employed, Alice can repeatedly send 
additional parity bits to Bob until Bob has a 
sufficiently large number of parity bits to make $\mathbf{r}_k$ identical to $\mathbf{q}_k$ and sends an acknowledge message to Alice. In this case, although a certain amount of overhead is inevitable, the rateless Slepian-Wolf codes can be a practical solution  since they do not require  instantaneous CSI between a transmitter and a receiver. 
Thus, throughout this paper, we assume that Bob can perfectly recover $\mathbf{q}_k$ using a rateless Slepian-Wolf code. 
The practical design and optimization of rateless Slepian-Wolf codes are presented in \cite{Eckford05Rateless, 4557405}.

However, unknown $I(Q_k; R^e_k)$ is still problematic. Since the eavesdroppers will not reveal their presence and information about the PCA, a mechanism 
must be introduced into the proposed scheme to 
estimate $I(Q_k; R^e_k)$, 
which is realized in this work by taking advantage of a trace left by the eavesdropper during the PCA. To be precisely, Bob is suspicious of the PCA if its received signal strength unexpectedly drops during the CRS. 
Then, Bob can estimate $I(Q_k; R^e_k)$ by comparing the received signal strength with its expectation. As depicted in Fig. \ref{Fg:SKA}, the estimation 
results, denoted by $\hat{g}^e_k$, are provided to the privacy amplification, 
which in the end generates secret keys as the final results of the proposed SKA protocol. The details of the proposed method  will be introduced and analyzed in the following sections. 

Before finishing this section, a few practical issues should be discussed. In fast fading environments, the proposed SKA protocol may result in a considerable overhead to establish a secret key at a sufficiently long length. 
In such a case, there might be a trade-off 
between the data throughput and key renewal rate for a given
secret key length. 
In addition, an unexpected drop of 
the received signal strength can happen due to user mobility or sudden environment changes, which may introduce a false alarm and thus degrade 
the performance of the proposed SKA protocol, i.e. the average secret key length.

\section{Estimation of Information Leakage} \label{Sec:EstInfLeak}

In this section, we first investigate an inevitable
complementary relation between SINR's at Bob and Eve, i.e. 
the increase of SINR at one party 
must result in the decrease of SINR at the other. 
Based on this complementary relation, we propose an estimator 
of the EDCG from Alice to Eve (i.e. $g^e_k$), which allows us
to estimate the information leakage to Eve, $I(Q_k; R^e_k)$. 
The crucial complementary relation also promises 
a better estimation result for a stronger PCA.

\subsection{The Impact of PCA on Average SINRs}

The average SINRs at Bob and Eve,
denoted by $\text{SINR}_k $ and $\text{SINR}^e_k$, are defined as 
\ifCLASSOPTIONonecolumn 
\begin{align*}
	\text{SINR}_k = \frac{ \mathbb{E} \left[ (g_k \mathbf{q}_k)^\dag (g_k \mathbf{q}_k )\right] }{\mathbb{E} \left[ {\mathbf{n}_k}^\dag \mathbf{n}_k  \right] } 
	\text{ and } 
	\text{SINR}^e_k =  \frac{ \mathbb{E} \left[ (g^e_k \mathbf{q}_k )^\dag (g^e_k \mathbf{q}_k )\right] }{\mathbb{E} \left[ {\mathbf{n}^e_k}^\dag \mathbf{n}^e_k  \right] }.
\end{align*}
\else
\begin{align*}
	\text{SINR}_k & = \mathbb{E}[ (g_k \mathbf{q}_k )^\dag (g_k \mathbf{q}_k )]/ (N_d \mathbb{E}[\mathbf{n}_k^\dag \mathbf{n}_k]), \text{ and } \\
	\text{SINR}^e_k & =  \mathbb{E} [(g^e_k \mathbf{q}_k )^\dag (g^e_k \mathbf{q}_k )]/(N_d \mathbb{E} [{\mathbf{n}^e_k}^\dag \mathbf{n}^e_k]).
\end{align*}
\fi
Then,  the average SINRs are derived in the following result. 

%

\begin{thm}\label{thm:SINR} When Bob and Eve receive the signals in  \eqref{eq:r_Bob2} and \eqref{eq:r_Eve2}, respectively, their SINR's are given by 
\ifCLASSOPTIONonecolumn 
\begin{align}\label{eq:SINR}
	\text{SINR}_k   = \frac{M c_k + w_k^2 c_k +1 }{(1 + (1+w_k^2) c_k)( K-1 + \frac{1}{p_d \beta_k})} \text{ and }
	\text{SINR}^e_k = \frac{M w_k^2 c_k + c_k  + 1 }{(1 + (1+w_k^2) c_k) (K - 1)}.
\end{align} 
\else
\begin{align}\label{eq:SINR}
	\text{SINR}_k   & = \frac{M c_k + w_k^2 c_k + 1}{(1 + (1 + w_k^2) c_k)( K - 1 + \frac{1}{p_d \beta_k})} \text { and} \\
	\text{SINR}^e_k & = \frac{M w_k^2 c_k + c_k + 1}{(1 + (1 + w_k^2) c_k) (K - 1)}.
\end{align} 
\fi
\end{thm}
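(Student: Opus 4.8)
The plan is to reduce both SINRs to the second moments of the effective downlink channel gains, and then evaluate those moments through an orthogonal (MMSE) decomposition of the true channels along the training observation $\mathbf{y}_k$. First I would observe that the matched-filter normalization in \eqref{eq:MF_precoder} cancels the scalar of the MMSE estimate, so that $\mathbf{a}_k = \mathbf{y}_k/\norm{\mathbf{y}_k}$ is simply the unit vector along $\mathbf{y}_k$ regardless of whether Alice knows $w_k$. Since $\mathbf{q}_k$ is independent of the channels with $\mathbb{E}[\norm{\mathbf{q}_k}^2]=N_d$, and since $\mathbf{n}_k$ and $\mathbf{n}^e_k$ carry the variances recorded in \eqref{EQ:sn} and in the text, the $N_d$ factors cancel and the two definitions collapse to $\text{SINR}_k=\mathbb{E}[|g_k|^2]/\sigma_{n_k}^2$ and $\text{SINR}^e_k=\mathbb{E}[|g^e_k|^2]/\sigma_{n_{e,k}}^2$. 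The whole theorem therefore rests on computing $\mathbb{E}[|g_k|^2]$ and $\mathbb{E}[|g^e_k|^2]$ with $g_k=\mathbf{h}_k^T\mathbf{a}_k/\sqrt{M}$ and $g^e_k={\mathbf{h}^e_k}^T\mathbf{a}_k/\sqrt{M}$; since the channel entries are circularly symmetric, I may replace the transposes by Hermitian products without altering these second moments.

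Next I would exploit joint Gaussianity. From \eqref{eq:y_k}, $\mathbf{y}_k\sim\mathcal{CN}(\mathbf{0}_M,\lambda_k\mathbf{I}_M)$ with $\lambda_k:=1+(1+w_k^2)c_k$. Using the conditional means $\mathbb{E}[\mathbf{h}_k\mid\mathbf{y}_k]=(\sqrt{c_k}/\lambda_k)\mathbf{y}_k$ and $\mathbb{E}[\mathbf{h}^e_k\mid\mathbf{y}_k]=(\sqrt{c_k}\,w_k/\lambda_k)\mathbf{y}_k$, I would split $\mathbf{h}_k=(\sqrt{c_k}/\lambda_k)\mathbf{y}_k+\tilde{\mathbf{h}}_k$ and $\mathbf{h}^e_k=(\sqrt{c_k}\,w_k/\lambda_k)\mathbf{y}_k+\tilde{\mathbf{h}}^e_k$, where the residuals are zero-mean and isotropic and, being the errors of a linear estimate of jointly Gaussian vectors, are statistically independent of $\mathbf{y}_k$, with covariances $\frac{\lambda_k-c_k}{\lambda_k}\mathbf{I}_M=\frac{1+w_k^2c_k}{\lambda_k}\mathbf{I}_M$ and $\frac{\lambda_k-w_k^2c_k}{\lambda_k}\mathbf{I}_M=\frac{1+c_k}{\lambda_k}\mathbf{I}_M$, respectively.

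The remaining computation is routine. Because $\mathbf{y}_k^\dagger\mathbf{a}_k=\norm{\mathbf{y}_k}$, projecting the decomposition onto $\mathbf{a}_k$ gives $\mathbf{h}_k^\dagger\mathbf{a}_k=(\sqrt{c_k}/\lambda_k)\norm{\mathbf{y}_k}+\tilde{\mathbf{h}}_k^\dagger\mathbf{a}_k$. Squaring the modulus and taking expectations, the cross term vanishes (the residual is zero-mean and independent of $\mathbf{y}_k$, hence of $\mathbf{a}_k$), the aligned term yields $\frac{c_k}{\lambda_k^2}\mathbb{E}[\norm{\mathbf{y}_k}^2]=\frac{Mc_k}{\lambda_k}$ via $\mathbb{E}[\norm{\mathbf{y}_k}^2]=M\lambda_k$, and the residual term contributes $\mathbb{E}[|\tilde{\mathbf{h}}_k^\dagger\mathbf{a}_k|^2]=\frac{1+w_k^2c_k}{\lambda_k}$ because $\tilde{\mathbf{h}}_k$ is isotropic and $\norm{\mathbf{a}_k}=1$. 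Dividing by $M$ gives $\mathbb{E}[|g_k|^2]=\frac{Mc_k+w_k^2c_k+1}{M\lambda_k}$; the identical steps with coefficient $w_k\sqrt{c_k}/\lambda_k$ and residual variance $\frac{1+c_k}{\lambda_k}$ give $\mathbb{E}[|g^e_k|^2]=\frac{Mw_k^2c_k+c_k+1}{M\lambda_k}$. Substituting $\sigma_{n_k}^2$ and $\sigma_{n_{e,k}}^2$ then produces exactly \eqref{eq:SINR}.

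The only genuine obstacle I anticipate is the statistical coupling between each channel and the \emph{data-dependent} beamforming direction $\mathbf{a}_k=\mathbf{y}_k/\norm{\mathbf{y}_k}$, which forbids factoring $\mathbb{E}[|\mathbf{h}_k^\dagger\mathbf{a}_k|^2]$ and makes the ratio $|\mathbf{h}_k^\dagger\mathbf{y}_k|^2/\norm{\mathbf{y}_k}^2$ unwieldy to attack directly. The MMSE split is precisely what defuses this: the component of the channel correlated with $\mathbf{y}_k$ carries the coherent beamforming gain that scales like $M$, while the orthogonal residual is isotropic and independent of the direction, so its projection has a fixed $O(1)$ variance and the cross term drops. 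The one bookkeeping point I would flag is the transpose-versus-Hermitian convention in the definition of $g_k$ and $g^e_k$; circular symmetry of the Gaussian channels renders the required second moments insensitive to this choice, so the computation above stands unchanged.
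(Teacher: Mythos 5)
Your proposal is correct and follows essentially the same route as the paper's Appendix~B: your split $\mathbf{h}_k = (\sqrt{c_k}/\lambda_k)\mathbf{y}_k + \tilde{\mathbf{h}}_k$ is exactly the paper's MMSE decomposition $\mathbf{h}_k = \hat{\mathbf{h}}_k + \mathbf{e}_k$ (since $\hat{\mathbf{h}}_k = \frac{\sqrt{c_k}}{1+(1+w_k^2)c_k}\mathbf{y}_k$), the aligned term $(\sqrt{c_k}/\lambda_k)\norm{\mathbf{y}_k}$ is the paper's $\Vert\hat{\mathbf{h}}_k\Vert$, and your isotropic-residual projection is what the paper establishes in its Appendix~A, with only cosmetic differences (you compute $\mathbb{E}[\norm{\mathbf{y}_k}^2]=M\lambda_k$ directly rather than via chi-square moments, and you justify the vanishing cross term by Gaussian independence of the estimation error, which is in fact the cleaner justification, since the paper's appeal to orthogonality alone does not literally cover the nonlinear quantity $\mathbb{E}[\Vert\hat{\mathbf{h}}_k\Vert e_k]$). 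Your closing caveat on transpose versus Hermitian mirrors a silent switch the paper itself makes in its proof, so it introduces no discrepancy with the paper's argument.
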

\begin{proof}
See Appendix \ref{App:SINR}.  
\end{proof}

\begin{rmk}
By letting $w_k \to 0$, the results in Theorem \ref{thm:SINR} turn into the ones for passive eavesdropping. Then, we have $\text{SINR}_{k}= \frac{M c_k +1 }{( 1 + c_k ) ( K-1 + \frac{1}{p_d \beta_k})}$ and  $\text{SINR}^e_k = (K - 1)^{-1}$, which shows that $\text{SINR}_{k}$ grows with the 
number of antennas, $M$. On the other hand, $\text{SINR}^e_k$ does not depend 
on $M$. Thus, the length of secret key can be increased by employing more antennas at Alice under the passive eavesdropping. While similar results can be found in \cite{Khisti_2010, Xiong_2012, Im13Secret}, the results in the work are different from the previous ones in that the signals for the other $K - 1$  users in our model act as interference and preclude the eavesdropper from taking information. 
\end{rmk}

The ratio of SINR's at Bob and Eve is readily approximated as $\gamma_k = \text{SINR}_k/\text{SINR}^e_k \approx 1/w^2_k$ 
when $K \gg 1$ and $M \gg 1$. The ratio clearly shows that the SINR at Bob is inversely proportional to the effective strength of the PCA, $w_k$, which leads to a better SINR at Eve. While the results in Theorem \ref{thm:SINR} describe the average behavior of SINR, the instantaneous amount of information leakage in each SKA protocol is not provided. In the next subsection, bearing the complementary relation in mind, we will derive
an estimator of the EDCG from Alice to Eve, which in turn gives us 
an instantaneous estimate of the information leakage to eavesdroppers.

\subsection{Estimation of Eavesdropper's Channel}
According to \eqref{eq:r_Eve}, the amount of information leaked to Eve, 
$I(Q_k ; R^e_k)$, is readily found by Bob when the EDCG from Alice to Eve, 
$g^e_k$ is known to Bob. To this end, in this subsection, 
we derive an estimator for $g^e_k$. 
The estimation of $g^e_k$ is carried out by fulfilling a series of estimations: 1) maximum-likelihood estimation (MLE) for $w_k$ 2) MMSE estimation for $g_k$,  and 3) MMSE estimation for $g^e_k$, 
where their estimates are denoted by $\hat{w}_k$, $\hat{g}_k$, and $\hat{g}^e_k$, respectively. The estimation results from the first two steps are used as unknown parameters in the estimations of $g^e_k$.

The main idea behind the proposed estimator is to exploit the complementary relation between the received signal strengths at Bob and Eve, which results in an unexpected drop of signal strength at the target user when the PCA is attempted to Bob. The difference between the received signal strengths and his expectation will be used to estimate $g^e_k$. However, the CSI of the channel from Alice to Bob is unknown to Bob and thus he does not know his expected signal strength. To resolve this issue, as shown in Fig. \ref{Fg:SKA}, Alice sends side information $\zeta_k = \norm {\mathbf{y}_k }$ for $ k \in \mathcal{K}$, i.e. the strengths of her received signals right after the information reconciliation phase. 


\subsubsection{Estimation of $w_k$}
With the side information $\zeta_k$ and the normalized received signal vector at Bob in \eqref{eq:r_Bob}, we can derive 
the MLE, $\hat{w}_k$, as
\begin{align}\label{eq:mle_w}
 	\hat{w}_k = \arg \max_{w_k} f(\mathbf{r}_k | \mathbf{q}_k, \zeta_k ; w_k),
\end{align}
where $\mathbf{q}_k$ is given by the information reconciliation 
using a rateless Slepian-Wolf code. The pdf $f(\mathbf{r}_k | \mathbf{q}_k, \zeta_k; w_k)$ in \eqref{eq:mle_w} can be factorized as  
\ifCLASSOPTIONonecolumn 
\begin{align}\label{eq:pdf_r}
f(\mathbf{r}_k | \mathbf{q}_k, \zeta_k; w_k)
	& =  \int f({\mathbf{r}}_k | \mathbf{q}_k, \zeta_k, g_k; w_k) f({g}_k | \mathbf{q}_k, \zeta_k; w_k) d{g}_k\nonumber\\
	& \mathop = \limits^{(a)} \int f({\mathbf{r}}_k | \mathbf{q}_k, g_k ;w_k) f(g_k| \zeta_k; w_k) dg_k,
\end{align}
\else 
\begin{align}\label{eq:pdf_r}
&f (\mathbf{r}_k | \mathbf{q}_k, \zeta_k ; w_k) \nonumber\\
	& = \int{f({\mathbf{r}}_k | \mathbf{q}_k, \zeta_k, g_k ;w_k) f({g}_k | \mathbf{q}_k , \zeta_k; w_k) d{g}_k } \nonumber\\
	& \mathop = \limits^{(a)} \int{f({\mathbf{r}}_k | \mathbf{q}_k, g_k ;w_k) f(g_k | \zeta_k; w_k) d{g}_k },
\end{align}
\fi
where $(a)$ is due to the facts that $\mathbf{r}_k$ is independent of $w_k$ and $\zeta_k$ for a  given $g_k$, and $g_k$ is independent of $\mathbf{q}_k$. 

Note that finding $\hat{w}_k$ by substituting \eqref{eq:pdf_r} into \eqref{eq:mle_w} requires an exhaustive search as there is no closed-form 
solution. It may be impractical to find 
the estimate by performing numerical integrations, especially when the users suffer from limited computing power and/or power resources. Thus, we 
approximate the MLE of $w_k$ by taking one of the key features of systems with LAA, i.e. the randomness caused by fading and noise vanishes as the number of antennas increases \cite{Marzetta_2010, Jose_2011}.  The following theorem describes the asymptotic behavior of $g_k$ as $M \to \infty$.
\begin{thm}\label{thm:convergence}
Under the PCA with $w_k$, as $M$ increases, $g_k$ converges to $\mu_{g_k} =\frac{\sqrt{c_k}}{1 + (1 + w_k^2) c_k} \frac{\zeta_k}{\sqrt{M}}$ in probability for a given $\zeta_k$.
\end{thm}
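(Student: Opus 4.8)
The plan is to substitute the matched-filter precoder into the definition of the EDCG and then exploit the orthogonality structure of the MMSE estimate to split $g_k$ into a deterministic term, which will be exactly the claimed $\mu_{g_k}$, and a fluctuation whose variance decays like $1/M$. First I would make the precoder explicit. Since $\hat{\mathbf{h}}_k = \kappa_k \mathbf{y}_k$ with $\kappa_k = \frac{\sqrt{c_k}}{1+(1+w_k^2)c_k}$, the normalization yields $\norm{\hat{\mathbf{h}}_k} = \kappa_k \zeta_k$, so $\mathbf{a}_k$ points along $\mathbf{y}_k$ and, under the matched-filter (conjugate/reciprocity) convention, the effective gain is the coherent combination $g_k = \frac{1}{\sqrt{M}}\frac{\mathbf{h}_k^\dag \hat{\mathbf{h}}_k}{\norm{\hat{\mathbf{h}}_k}} = \frac{1}{\sqrt{M}}\frac{\mathbf{h}_k^\dag \mathbf{y}_k}{\zeta_k}$. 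This already isolates the $\zeta_k$ and $1/\sqrt{M}$ dependence appearing in $\mu_{g_k}$.

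Next I would invoke the MMSE orthogonality principle. Because $\mathbf{h}_k$ and $\mathbf{y}_k$ are jointly Gaussian, the estimation error $\tilde{\mathbf{h}}_k = \mathbf{h}_k - \hat{\mathbf{h}}_k$ is not merely uncorrelated with $\mathbf{y}_k$ but statistically \emph{independent} of it, with i.i.d.\ $\mathcal{CN}(0,\sigma^2_{\tilde h,k})$ entries, where $\sigma^2_{\tilde h,k} = 1 - \frac{c_k}{1+(1+w_k^2)c_k}$. Writing $\mathbf{h}_k = \kappa_k \mathbf{y}_k + \tilde{\mathbf{h}}_k$ and substituting, $\mathbf{h}_k^\dag \mathbf{y}_k = \kappa_k \norm{\mathbf{y}_k}^2 + \tilde{\mathbf{h}}_k^\dag \mathbf{y}_k = \kappa_k \zeta_k^2 + \tilde{\mathbf{h}}_k^\dag \mathbf{y}_k$, so that $g_k = \frac{\kappa_k \zeta_k}{\sqrt{M}} + \frac{\tilde{\mathbf{h}}_k^\dag \mathbf{y}_k}{\sqrt{M}\,\zeta_k}$. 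The first term is precisely $\mu_{g_k}$ and, crucially, it is deterministic once $\zeta_k$ is fixed; all of the remaining randomness is confined to the second term.

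Finally I would show that the residual $\frac{\tilde{\mathbf{h}}_k^\dag \mathbf{y}_k}{\sqrt{M}\,\zeta_k}$ vanishes in probability. Conditioning on $\mathbf{y}_k$ (hence on $\zeta_k$) and using the independence of $\tilde{\mathbf{h}}_k$, this term has zero conditional mean and conditional second moment $\frac{1}{M\zeta_k^2}\,\mathbb{E}[\,|\tilde{\mathbf{h}}_k^\dag \mathbf{y}_k|^2 \mid \mathbf{y}_k\,] = \frac{\sigma^2_{\tilde h,k}\,\zeta_k^2}{M\zeta_k^2} = \frac{\sigma^2_{\tilde h,k}}{M}$, which tends to $0$ \emph{uniformly} in the value of $\zeta_k$. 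Chebyshev's inequality then gives $g_k - \mu_{g_k} \to 0$ in probability for each fixed $\zeta_k$, which is the claim. The main obstacle to watch is the \emph{correlation} between $\mathbf{h}_k$ and the conditioning variable $\zeta_k = \norm{\mathbf{y}_k}$: a naive law of large numbers applied to $\mathbf{h}_k^\dag \mathbf{y}_k$ (treating $\frac{1}{M}\mathbf{h}_k^\dag \mathbf{y}_k \to \sqrt{c_k}$ even after conditioning) would produce the unconditionally equivalent but conditionally incorrect expression $\sqrt{c_k}\sqrt{M}/\zeta_k$. The MMSE orthogonality decomposition is exactly the device that removes this difficulty, since it captures the part of $\mathbf{h}_k$ aligned with the observation $\mathbf{y}_k$ exactly and relegates everything else to an independent, $O(1/M)$-variance fluctuation.
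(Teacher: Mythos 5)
Your proof is correct and takes essentially the same route as the paper: the orthogonality decomposition $\mathbf{h}_k = \kappa_k \mathbf{y}_k + \tilde{\mathbf{h}}_k$ with $\tilde{\mathbf{h}}_k$ independent of $\mathbf{y}_k$ is precisely the content of the paper's Appendix D derivation of $f(g_k \mid \zeta_k ; w_k) \sim \mathcal{CN}(\mu_{g_k}, \sigma^2_{g_k})$, and your residual variance $\sigma^2_{\tilde h,k}/M$ coincides exactly with the paper's $\sigma^2_{g_k} = \frac{1+w_k^2 c_k}{\{1+(1+w_k^2)c_k\}M}$. Like the paper, you conclude from the vanishing conditional variance (your Chebyshev step in place of citing the Gaussian posterior pdf directly), so the difference is only presentational.
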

\begin{proof}
See Appendix \ref{App:Limit_EDCG}.
\end{proof}

Thus, by applying Theorem \ref{thm:convergence} to \eqref{eq:pdf_r}, we have 
$f(\mathbf{r}_k | \mathbf{q}_k, g_k ; w_k) \to f(\mathbf{r}_k | \mathbf{q}_k, \mu_{g_k}; w_k) $ in probability for a large $M$.
This result allows us to have a simpler MLE of $w_k$. 
That is, the value of $w_k$ for which the derivative of $f(\mathbf{r}_k | \mathbf{q}_k, \mu_{g_k}; w_k) $ with respect to $w_k$ is equal to zero corresponds to $\hat{w}_k$. After some manipulations, we obtain a closed form expression 
for $\hat{w}_k$ as 
\begin{align}\label{eq:MLE_wk}
	\hat{w}_k = \sqrt{ \left[ \frac{\zeta_k}{\frac{\mathbf{r}_k^\dag \mathbf{q}_k}{\mathbf{q}_k^\dag \mathbf{q}_k} \sqrt{c_k M}} - \left(1+\frac{1}{c_k} \right) \right]^+}.
\end{align}

\subsubsection{Estimation of $g_k$} 
The MMSE estimator for $g_k$, i.e. $\hat{g}_k = \mathbb{E} \left[ g_k | \mathbf{r}_k , \mathbf{q}_k , \zeta_k ; w_k \right] $ can be derived as 
\begin{equation} \label{eq:gk}
	\hat{g}_{k} = \frac{\mathbf{r}_k^\dag \mathbf{q}_k + \sigma^2_{n_k} \mu_{g_k}/ \sigma^2_{g_k} }{\mathbf{q}_k^\dag \mathbf{q}_k + \sigma^2_{n_k}/ \sigma^2_{g_k} }.
\end{equation}
The details of the derivation are given in Appendix \ref{App:MMSE_gk}.

\subsubsection{Estimation of $g^e_k$}
The MMSE estimator for $g^e_k$ is obtained by taking the conditional expectation of $g^e_k$ given the known parameters, $\mathbf{r}_k$, $ \mathbf{q}_k$, and $\zeta_k$  \cite{Poor_1994}. That is, $\hat{g}^e_k$ is given by
 \begin{align}\label{eq:MMSE_gek}
 	\hat{g}^e_k
	&= \mathbb{E}[ g^e_k | \mathbf{r}_k , \mathbf{q}_k , \zeta_k ; w_k],
\end{align}
which is derived in the following theorem. 
\begin{thm}\label{thm:MMSE_gek}
For given $\mathbf{r}_k$, $ \mathbf{q}_k$, and $\zeta_k$, the MMSE 
estimator for $g^e_k$ is given by 
\begin{align}\label{eq:MMSE_gek2}
	\hat{g}^e_k = \frac{w_k c_k}{1+ w^2_k c_k} \left(\frac{\zeta_k}{\sqrt{c_k M}}- \hat{g}_k \right),
\end{align}
where $\hat{g}_k$ is the MMSE estimate of $g_k$ in \eqref{eq:gk}. 
\end{thm}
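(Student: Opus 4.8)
The plan is to evaluate the conditional expectation in \eqref{eq:MMSE_gek} directly, turning the MMSE estimator into a linear regression by exploiting joint Gaussianity. The guiding observation is that, conditioned on the side information $\zeta_k$, the received vector $\mathbf{r}_k$ carries information about $g^e_k$ \emph{only} through $g_k$: since $\mathbf{r}_k = g_k \mathbf{q}_k + \mathbf{n}_k$ with $\mathbf{n}_k$ (conditionally) independent of the pair $(g_k, g^e_k)$, the three form a Markov chain $g^e_k \to g_k \to \mathbf{r}_k$. Consequently the best estimate of $g^e_k$ is obtained by first refining the estimate of $g_k$ to $\hat{g}_k$ of \eqref{eq:gk} and then regressing $g^e_k$ onto this refined value, which is exactly the structure that appears in \eqref{eq:MMSE_gek2}.

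First I would make explicit that the matched-filter direction is $\mathbf{a}_k = \hat{\mathbf{h}}_k / \|\hat{\mathbf{h}}_k\| = \mathbf{y}_k / \zeta_k$, so that both EDCGs are projections of the two channels onto one common, conditionally fixed, unit vector. Then I would compute the joint posterior law of $(\mathbf{h}_k, \mathbf{h}^e_k)$ given $\mathbf{y}_k$ from \eqref{eq:y_k}: writing $\kappa = 1 + (1+w_k^2)c_k$, the posterior means are $\hat{\mathbf{h}}_k = (\sqrt{c_k}/\kappa)\mathbf{y}_k$ and $\hat{\mathbf{h}}^e_k = w_k \hat{\mathbf{h}}_k$, while the posterior covariances are $\frac{1+w_k^2 c_k}{\kappa}\mathbf{I}$, $\frac{1+c_k}{\kappa}\mathbf{I}$, and the cross-covariance $-\frac{c_k w_k}{\kappa}\mathbf{I}$. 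Projecting onto $\mathbf{a}_k$ and scaling by $1/\sqrt{M}$ then shows that $(g_k, g^e_k)$ is jointly Gaussian with means $\mu_{g_k}$ and $w_k \mu_{g_k}$ (consistent with Theorem \ref{thm:convergence}), variance $\sigma^2_{g_k} = \frac{1+w_k^2 c_k}{M\kappa}$, and cross-covariance $-\frac{c_k w_k}{M\kappa}$. A useful simplification is that, by the rotational invariance of the isotropic Gaussian channels, this conditional law depends on $\mathbf{y}_k$ only through $\zeta_k = \|\mathbf{y}_k\|$, so conditioning on $\zeta_k$ and on the full $\mathbf{y}_k$ are equivalent for the pair $(g_k, g^e_k)$.

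With these ingredients the estimator follows from the orthogonality decomposition $g^e_k = w_k \mu_{g_k} + \beta(g_k - \mu_{g_k}) + \epsilon$, where $\beta = \mathrm{Cov}(g^e_k, g_k \mid \zeta_k)/\sigma^2_{g_k}$ and $\epsilon$ is uncorrelated with $g_k$ and with $\mathbf{n}_k$, hence independent of $\mathbf{r}_k$. Applying $\mathbb{E}[\,\cdot \mid \mathbf{r}_k, \mathbf{q}_k, \zeta_k]$ annihilates $\epsilon$ and replaces $g_k$ by $\hat{g}_k$, giving $\hat{g}^e_k = w_k \mu_{g_k} + \beta(\hat{g}_k - \mu_{g_k})$. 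Substituting $\beta = -\frac{c_k w_k}{1+w_k^2 c_k}$ and $\mu_{g_k} = \frac{\sqrt{c_k}}{\kappa}\frac{\zeta_k}{\sqrt{M}}$, and collecting the $\mu_{g_k}$ terms using $\kappa = 1 + c_k + w_k^2 c_k$, the coefficient of $\zeta_k/\sqrt{M}$ becomes $\frac{w_k \sqrt{c_k}}{1+w_k^2 c_k}$, which assembles into the claimed $\hat{g}^e_k = \frac{w_k c_k}{1+w_k^2 c_k}\left(\frac{\zeta_k}{\sqrt{c_k M}} - \hat{g}_k\right)$.

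The main obstacle I anticipate is justifying the two structural reductions rather than the closing algebra: first, that the interference-plus-noise $\mathbf{n}_k$ is conditionally independent of $(g_k, g^e_k)$ so the Markov/regression step is exact, which rests on the other users' precoders being statistically independent of $\mathbf{a}_k$ and, for finite $M$, on the near-orthogonality established in Appendix \ref{App:A}; and second, the correct sign and size of the posterior cross-covariance, since it is precisely the negative correlation $-c_k w_k/\kappa$ (more channel energy attributed to $\mathbf{h}_k$ means less to $\mathbf{h}^e_k$ for a fixed $\mathbf{y}_k$) that produces the complementary $\frac{\zeta_k}{\sqrt{c_k M}} - \hat{g}_k$ structure and thereby encodes the leakage estimate. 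Everything downstream is routine scalar Gaussian algebra building on \eqref{eq:gk} and Theorem \ref{thm:convergence}.
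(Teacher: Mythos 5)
Your proof is correct and follows essentially the same route as the paper: both rest on the conditional independence of $g^e_k$ from $(\mathbf{r}_k,\mathbf{q}_k)$ given $(g_k,\zeta_k)$ (the paper's step $(a)$, your Markov chain), followed by the tower property, with the inner conditional mean $\mathbb{E}[g^e_k\,|\,g_k,\zeta_k;w_k]=\frac{w_k c_k}{1+w_k^2 c_k}\bigl(\frac{\zeta_k}{\sqrt{c_k M}}-g_k\bigr)$ being affine in $g_k$ so that $\hat g_k$ can simply be substituted. The only difference is cosmetic: you extract this inner regression from the joint posterior cross-covariance $-c_k w_k/(M\kappa)$ of $(g_k,g^e_k)$ given $\mathbf{y}_k$, whereas the paper's Appendix D computes the conditional density $f(g^e_k\,|\,g_k,\zeta_k;w_k)$ explicitly by conditioning $\mathbf{h}^e_k$ on $(\mathbf{h}_k,\mathbf{y}_k)$ and projecting onto $\mathbf{a}_k$ --- two standard and equivalent ways of doing the same Gaussian conditioning, and your posterior moments all check out against the paper's.
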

\begin{proof}
From \eqref{eq:MMSE_gek}, we can find $\hat{g}^e_k$ by conducting a serious of decomposition as follows: 
\ifCLASSOPTIONonecolumn 
\begin{align}\label{eq:MMSE_gek_proof}
	\hat{g}^e_k
	&= \int  g^e_k f(g^e_k | \mathbf{r}_k , \mathbf{q}_k , \zeta_k ; w_k) d g^e_k  
	 = \int  g^e_k \int f(g^e_k, g_k | \mathbf{r}_k , \mathbf{q}_k , \zeta_k; w_k) d g_k d g^e_k  \nonumber\\
	&= \int  g^e_k \int f(g^e_k | g_k , \mathbf{r}_k , \mathbf{q}_k , \zeta_k; w_k) f(g_k | \mathbf{r}_k , \mathbf{q}_k , \zeta_k; w_k)d g_k d g^e_k  \nonumber\\
	&\mathop = \limits^{(a)}  \int  g^e_k \int f(g^e_k | g_k , \zeta_k; w_k) 
	f(g_k | \mathbf{r}_k , \mathbf{q}_k , \zeta_k; w_k)d g_k d g^e_k  \nonumber \\
	&\mathop = \limits^{(b)}  \int  \left[ \int g^e_k f(g^e_k | g_k , \zeta_k; w_k) dg^e_k  \right] 
	f(g_k | \mathbf{r}_k , \mathbf{q}_k , \zeta_k; w_k)d g_k   \nonumber \\
	&\mathop = \limits^{(c)}   \frac{w_k c_k}{1+ w_k^2 c_k} \left( \frac{\zeta_k}{\sqrt{c_k M}} - \mathbb{E} \left[ g_k | \mathbf{r}_k , \mathbf{q}_k , \zeta_k; w_k \right]   \right) ,
 \end{align}
 \else 
 \begin{align}\label{eq:MMSE_gek_proof}
	\hat{g}^e_k
	&= \int  g^e_k f(g^e_k | \mathbf{r}_k , \mathbf{q}_k , \zeta_k ; w_k) d g^e_k  \nonumber\\
	&= \int  g^e_k \int f(g^e_k, g_k | \mathbf{r}_k , \mathbf{q}_k , \zeta_k ; w_k) d g_k d g^e_k  \nonumber\\
	&= \int  g^e_k \int f(g^e_k | g_k , \mathbf{r}_k , \mathbf{q}_k , \zeta_k ; w_k) \nonumber\\
	&\qquad\qquad\qquad\qquad\qquad\quad f(g_k | \mathbf{r}_k , \mathbf{q}_k , \zeta_k ; w_k)d g_k d g^e_k  \nonumber\\
	&\mathop = \limits^{(a)}  \int  g^e_k \int f(g^e_k | g_k , \zeta_k ; w_k) 
	    f(g_k | \mathbf{r}_k , \mathbf{q}_k , \zeta_k ; w_k)d g_k d g^e_k  \nonumber \\
	&\mathop = \limits^{(b)}  \int  \left[ \int g^e_k f(g^e_k | g_k , \zeta_k ; w_k) dg^e_k  \right] 
	f(g_k | \mathbf{r}_k , \mathbf{q}_k , \zeta_k ; w_k)d g_k   \nonumber \\
	&\mathop = \limits^{(c)}   \frac{w_k c_k}{1+ w_k^2 c_k} \left( \frac{\zeta_k}{\sqrt{c_k M}} - \mathbb{E} \left[ g_k | \mathbf{r}_k , \mathbf{q}_k , \zeta_k ; w_k \right]   \right) ,
 \end{align}
 \fi
 where $(a)$ is from the fact that $g^e_k$ is independent of $ \mathbf{r}_k $ and $ \mathbf{q}_k $ for given $g_k $ and $ \zeta_k $, $(b)$ is from the Fubini's theorem that allows us to change the order of integration, $(c)$ is from the fact that $\int g^e_k f (g^e_k | g_k , \zeta_k ; w_k ) dg^e_k =\mu_{c} $ derived in Appendix \ref{App:pdfs}.
\end{proof}
\begin{rmk}
The expression for $\hat{g}^e_k$ in (\ref{eq:MMSE_gek2}) shows that the estimator exploits the complementary relation as expected. 
That is, Bob first tries to estimate $g_k$, and then $g^e_k$ 
by comparing the estimate of
$g_k$ with its expected value, $\frac{\zeta_k}{\sqrt{c_k M}}$.
Thus, for a given $\zeta_k$, the smaller $\hat{g}_{k}$ Bob has, the larger $\hat{g}^e_k$ the estimator produces. In the end, the PCA is detected with the higher probability. It should be noted that while we do not discuss in this work, a detector can be derived with the result in \eqref{eq:MMSE_gek2} with which the shared random sequence can be discarded when eavesdropping is detected as the BB84 protocol does. In the evaluations of the $\hat{g}^e_k$, the unknown parameter $w_k$ will be replaced with the estimated one, $\hat{w}_k$ in \eqref{eq:MLE_wk}. 
\end{rmk}

In Section \ref{Sec:Results}, 
the performances of the proposed estimator in \eqref{eq:MMSE_gek2} will 
be  evaluated in terms of the MSE \cite{Poor_1994} which is derived 
in this subsection.  Let us first rewrite the estimate of $g_k$ as $\hat{g}_k = g_k - e_{k,1}$, where $e_{k,1}$ is the MMSE estimation error 
of $\hat{g}_k$.  Then, the MMSE 
estimator for $g^e_k$ in \eqref{eq:MMSE_gek2} can be rewritten as 
\begin{align}\label{Structure of MMSE estimator of ge}
	\hat{g}^e_k
	&= \frac{w_k c_k}{1+ w^2_k c_k} \left(\frac{\zeta_k}{\sqrt{c_k M}}- g_k + e_{k,1} \right) \nonumber\\
	&\mathop = \limits^{(a)} \mathbb{E}[g^e_k|\zeta_k ,g_k] + \frac{w_k c_k}{1+w^2_k c_k} e_{k,1} \nonumber\\
	& = g^e_k - e_{k,2} + \frac{w_k c_k}{1+w^2_k c_k} e_{k,1} 
	 = g^e_k - e_{k,2} + e^\prime_{k,1}, 
\end{align}
where $(a)$ is due to the fact that the conditional mean value of $g^e_k$ for given $g_k$ and $\zeta_k$ is $\frac{w_k c_k}{1+ w^2_k c_k} (\frac{\zeta_k}{\sqrt{c_k M}}- g_k)$ as shown in Appendix \ref{App:pdfs}, and $e_{k,2}$ is defined as the MMSE estimation error of $g^e_k$ when $g_k$ is perfectly known to Bob.
Thus, the MSE of $\hat{g}^e_k$ becomes
\ifCLASSOPTIONonecolumn 
\begin{align*}
	\mathbb{E}[ |\hat{g}^e_k - g^e_k|^2; w_k ] 
	   = \, & \mathbb{E}[| e_{k,2} |^2] + \mathbb{E} [| e^\prime_{k,1} |^2 ] - 2 \mathbb{E} [ \Re \{e_{k,2} e^\prime_{k,1}\}] 
	 \mathop = \limits^{(a)} \, \mathbb{E} [| e_{k,2} |^2 ] + \left(\frac{w_k c_k}{1+w^2_k c_k}\right)^2 \mathbb{E} [ | e_{k,1} |^2 ],
\end{align*}
\else
\begin{align*}
	\mathbb{E}[ |\hat{g}^e_k & -  g^e_k|^2; w_k ] \\
	& = \mathbb{E}[| e_{k,2} |^2] + \mathbb{E} [| e^\prime_{k,1} |^2 ] - 2 \mathbb{E} [ \Re \{e_{k,2} e^\prime_{k,1}\}] \nonumber\\
	& \mathop = \limits^{(a)} \mathbb{E} [| e_{k,2} |^2 ] + \left(\frac{w_k c_k}{1+w^2_k c_k}\right)^2 \mathbb{E} [ | e_{k,1} |^2 ],
\end{align*}
\fi
where $(a)$ is from the facts that $e_{k,2}$ is independent of $e_{k,1}$, and $\mathbb{E}[e_{k,i}]=0$ for $i=1,2$ since the MMSE estimation error is Gaussian with zero mean  \cite{Poor_1994, ChoiBook}. 
Note that $\mathbb{E} [ | e_{k,2} |^2 ]$ and $\mathbb{E}[| e_{k,1} |^2 ]$ correspond to the conditional variances of $g^e_k$ and $g_k$ with respect to the pdfs $f(g^e_k| \zeta_k, g_k ; w_k)$ and $f (g_k | \mathbf{r}_k , \mathbf{q}_k , \zeta_k ; w_k)$, respectively, which are derived in Appendices \ref{App:pdfs} and \ref{App:MMSE_gk}, respectively. Then, the MSE of $\hat{g}^e_k$ is finally given by 
\ifCLASSOPTIONonecolumn 
\begin{align} \label{eq:MSE_gek}
	\mathbb{E} [| \hat{g}^e_k - g^e_k|^2; w_k]
	& = \frac{1}{(1+w^2_k c_k)M} + \left(\frac{w_k c_k}{1+w^2_k c_k}\right)^2 \frac{\sigma^2_{g_k}\sigma^2_{n_k}}{\sigma^2_{g_k} \mathbf{q}^\dag_k \mathbf{q}_k+\sigma^2_{n_k}} \nonumber\\
	& = \frac{1}{M} \left\{	\frac{1}{1+w^2_k c_k} + 
		  \frac{ \left(\frac{w_k c_k}{1+w^2_k c_k}\right)^2}
			   { \left(K-1+\frac{1}{p_d\beta_k}\right) ^{-1} \mathbf{q}^\dag_k \mathbf{q}_k + \frac{1+ c_k + w^2_k c_k}{1+w^2_k c_k} } \right\}.
\end{align}
\else
\begin{align} \label{eq:MSE_gek}
	&\mathbb{E} [| \hat{g}^e_k - g^e_k|^2; w_k] \nonumber \\
	& = \frac{1}{(1+w^2_k c_k)M} + \left(\frac{w_k c_k}{1+w^2_k c_k}\right)^2 \frac{\sigma^2_{g_k}\sigma^2_{n_k}}{\sigma^2_{g_k} \mathbf{q}^\dag_k \mathbf{q}_k+\sigma^2_{n_k}} \nonumber\\
	& = \frac{1}{M} \left\{	\frac{1}{1+w^2_k c_k} + 
		  \frac{ \left(\frac{w_k c_k}{1+w^2_k c_k}\right)^2}
			   { \frac{\mathbf{q}^\dag_k \mathbf{q}_k}{\left(K-1+\frac{1}{p_d\beta_k}\right)} + \frac{1+ (1 + w^2_k)c_k}{1+w^2_k c_k} } \right\},
\end{align}
\fi
where $\sigma^2_{n_k}$ is taken from \eqref{EQ:sn}. The MSE in (\ref{eq:MSE_gek}) looks inversely proportional to the number of antennas, $M$.

\begin{rmk}
  Before closing this sub-section, it should be noted that Alice can also transmit downlink pilot signals which may help Bob to estimate the EDCGs, $g_k$, and $g^e_k$. Such a two-way training strategy \cite{Gomadam08Channel,Jose11Channel} is especially helpful under fast fading environments where the length of random sequence $\mathbf{q}_k$ is limited due to the short coherence time interval. Since the proposed scheme estimates the EDCGs based on the random sequence $\mathbf{q}_k$, the limited length of random sequence results in poor estimates of $g_k$ and $g^e_k$. In such a case, it may be beneficial to allocate a portion of the downlink transmission to the downlink pilot signals, which significantly improves the estimates. While the two-way training strategy reduces the length of random sequence, the better estimates of EDCGs may offset the decrease of the length of the random sequence. Thus, by carefully allocating the downlink transmission to the pilot signals and random sequence, a longer secret key may be achievable. As another practical issue, we may need to consider that while this work assumes the perfect channel reciprocity, the uplink and downlink channels may change during the key sharing process, which increases the channel estimate error. The analysis with the channel variation can be conducted by modifying the derivations for the estimation errors derived in this section.
\end{rmk}

\subsection{Secret Key Generation}
We can now estimate $I (Q_k ; R^e_k)$  in \eqref{eq:Key1} by replacing the true value of $g^e_k$ with its estimate, $\hat{g}^e_k$ derived in the previous 
subsection. Let us denote the estimate of $I (Q_k ; R^e_k)$ based on $\hat{g}^e_k$ by $I(Q_k ; \hat{R}^e_k)$ with which the length of secret key is now adaptively determined as  
\begin{align}\label{eq:Key3}
	\hat{s}_k & = [ N_d \{  I(Q_k ; R_k) - I ( Q_k ; \hat{R}^e_k) \} -2a_k -2 - b_k ]^+ \\
	          & = [ N_d \{  H(Q_k|\hat R^e_k) - H(Q_k|R_k) \} -2a_k -2 - b_k ]^+ \nonumber \\
	          & \ge [ N_d \{  H(Q_k|\hat R^e_k) - \nu_k \} -2a_k -2 - b_k ]^+, \label{eq:Key31}
\end{align}
where $\hat R^e_k = \hat g^e_k Q_k + N_k$ is the output from an AWGN channel with gain $\hat g^e_k$ and Gaussian noise $N_k$, and $\nu_k \ge H(Q_k|R_k)$ is the number of parity bits exchanged in the information reconciliation phase. Later in the performance evaluations, we assume that $\nu_k$ equals 
$H(Q_k|R_k)$, and the equality holds in \eqref{eq:Key31}.
However, the estimation error in $\hat{g}^e_k$ may result in an underestimate 
of $I (Q_k ; {R}^e_k)$.
According to \eqref{eq:Key2}, a secrecy outage occurs when the true value of $g^e_k$ is greater than its estimate, $\hat{g}^e_k$. In this case, Alice and Bob fail to make the generated secret key completely secure from the eavesdropping. To analyze how often the proposed SKA protocol causes the outage event, we evaluate the secrecy outage probability, i.e $\Pr(|\hat{g}^e_k| < |g^e_k|)$.

We introduce to the SKA protocol a design parameter called a secrecy margin, $\delta \in [0, \infty)$ to compensate for the estimation error in $\hat{g}^e_k$ and define an outage event as
\begin{align}\label{eq:out_event}
	\mathcal{S}=\{| g^e_k |^2 \,|\,(1 + \delta)^2 |\hat{g}^e_k|^2 < |g^e_k|^2\}.
\end{align}
Then, for given $w_k$ and $\delta$, the secrecy outage probability is expressed as
\ifCLASSOPTIONonecolumn 
\begin{align}\label{OPforGivenCB}
	P_{\rm out}(\mathbf{r}_k,\mathbf{q}_k, \zeta_k; w_k, \delta) 
	 = \int_{|g^e_k|^2\in\mathcal{S}}f(|g^e_k|^2|\mathbf{r}_k,\mathbf{q}_k, \zeta_k; w_k)d|g^e_k|^2.
\end{align}
\else
\begin{align}\label{OPforGivenCB}
	&P_{\rm out}(\mathbf{r}_k,\mathbf{q}_k, \zeta_k; w_k, \delta) \nonumber\\
	& = \int_{|g^e_k|^2\in\mathcal{S}}f(|g^e_k|^2|\mathbf{r}_k,\mathbf{q}_k, \zeta_k;w)d|g^e_k|^2.
\end{align}
\fi
Meanwhile, the secrecy margin turns $\hat R^e_k$ into $\hat R^e_k = (1 + \delta) \hat g^e_k Q_k + N_k$, which decreases the secret key length, $\hat s_k$ in \eqref{eq:Key3} at the expense of the outage probability.

Since the pdf $f(g^e_k|\mathbf{r}_k,\mathbf{q}_k, \zeta_k; w_k)$ follows $\mathcal{CN}(\mu_{\hat{g}^e_k},\sigma^2_{\hat{g}^e_k})$ as shown in Appendix \ref{App:pdfs}, the conditional pdf $f( |g^e_k|^2 |\mathbf{r}_k,\mathbf{q}_k, \zeta_k; w_k)$ in \eqref{OPforGivenCB} follows a Rice distribution. 
Thus, we can simply express $P_{\rm out}(\mathbf{r}_k,\mathbf{q}_k, \zeta_k; w_k, \delta)$ in \eqref{OPforGivenCB} as 
\ifCLASSOPTIONonecolumn 
\begin{align}\label{OPforGivenCBSimpler}
		P_{\rm out}(\mathbf{r}_k,\mathbf{q}_k, \zeta_k; w_k, \delta)
		=Q_1\left(\sqrt{\frac{2|\mu_{\hat{g}^e_k}|^2}{\sigma^2_{\hat{g}^e_k}}},
						 \sqrt{\frac{2|(1+\delta)\hat{g}^e_k|^2}{\sigma^2_{\hat{g}^e_k}}}\right),
\end{align}
\else
\begin{align}\label{OPforGivenCBSimpler}
		&P_{\rm out}(\mathbf{r}_k,\mathbf{q}_k, \zeta_k; w_k, \delta) \nonumber\\
		&=Q_1\left(\sqrt{\frac{2|\mu_{\hat{g}^e_k}|^2}{\sigma^2_{\hat{g}^e_k}}},
						 \sqrt{\frac{2|(1+\delta)\hat{g}^e_k|^2}{\sigma^2_{\hat{g}^e_k}}}\right),
\end{align}
\fi
where $Q_1(a,b)=\int^\infty_b x\exp\left(-\frac{x^2+a^2}{2}\right) I_0(ax)dx$ is the first order generalized Marcum $Q$-function, and $I_0 (\cdot)$ is the zero-th order modified Bessel function. By averaging the secrecy outage probability in \eqref{OPforGivenCBSimpler} with respect to the joint pdf $f( \mathbf{r}_k , \mathbf{q}_k , \zeta_k ;w_k)$, the average secrecy outage probability is given by 
\begin{align}\label{AvgOP} 
	\bar{P}_{\rm out}(w_k,\delta) 
	& = \mathbb{E} [P_{\rm out}(\mathbf{r}_k,\mathbf{q}_k, \zeta_k; w_k, \delta)].
\end{align}

To get more insightful results, we introduce to \eqref{OPforGivenCBSimpler} an upper bound 
\begin{align}\label{UB1ofOP}
	Q_1(a,b)&\leq\exp\left[-\frac{(b-a)^2}{2}\right],
\end{align}
for $a\leq b$ \cite{Simon00Exponential,Corazza_PoutUB}. Then, the average secrecy outage probability is also bounded as
\ifCLASSOPTIONonecolumn 
\begin{align}\label{UBOnOP} 
    \bar{P}_{\rm out}(w_k,\delta) 
	&\leq \mathbb{E} \left[ \exp\left\{-\frac{\left(| (1+\delta)\hat{g}^e_k| - | \mu_{\hat{g}^e_k} | \right)^2}{\sigma^2_{\hat{g}^e_k}}\right\}\right]. 
\end{align}
\else
\begin{align}\label{UBOnOP} 
	& \bar{P}_{\rm out}(w_k,\delta) 
	 \leq \mathbb{E} \left[ \exp\left\{-\frac{\left(| (1+\delta)\hat{g}^e_k| - | \mu_{\hat{g}^e_k} | \right)^2}{\sigma^2_{\hat{g}^e_k}}\right\} \right].
\end{align}
\fi
Now, it is much easier to investigate into asymptotic behaviors of the average secrecy outage probability with the upper bound in \eqref{UBOnOP}. That is, as $M$ grows, the upper bounds in \eqref{UBOnOP} turns into  
\ifCLASSOPTIONonecolumn
\begin{equation}\label{UBOnOP2}
  \lim_{M \rightarrow \infty}   \bar{P}_{\rm out}(w_k,\delta) 
     \le \exp\left[-\frac{(1+w^2_k c_k)M\delta^2}{1 + \frac{w^2_k c^2_k \left(K-1+\frac{1}{p_d\beta_k}\right)}{\mathbf{q}^\dag_k \mathbf{q}_k(1+w^2_k c_k) + \left(K-1+\frac{1}{p_d\beta_k}\right)\{1 + (1+w^2_k)c_k\}}}\right].
\end{equation}
\else
\begin{align}
  \lim_{M \rightarrow \infty} & \bar{P}_{\rm out}(w_k,\delta) \label{UBOnOP2}\\
     \le & \exp\left[-\frac{(1+w^2_k c_k)M\delta^2}{1 + \frac{w^2_k c^2_k \left(K-1+\frac{1}{p_d\beta_k}\right)}{\mathbf{q}^\dag_k \mathbf{q}_k(1+w^2_k c_k) + \left(K-1+\frac{1}{p_d\beta_k}\right)\{1 + (1+w^2_k)c_k\}}}\right]. \nonumber
\end{align}
\fi
since $\hat{g}^e_k \to  \mu_{\hat{g}^e_k}$ in probability.

It can be noticed that the bound in \eqref{UBOnOP2} decreases exponentially fast with $M$, which is possible due to the fact that Bob can accurately adjust the length of extracting secret key as $\hat{g}^e_k$ gets close to $g^e_k$ with the growing number antennas. That is, even if the beam is tilted toward Eve under the PCA, the LAA is still helpful for Bob not only to eliminate the noise/fading effects from the received signal but also to estimate the amount of information leakage. In Section \ref{Sec:Results}, we will analyze the average secrecy outage  probability with various combinations of $w_k$ and $\delta$. 

\section{Performance Evaluations}\label{Sec:Results}
In this section, we present performances of the proposed SKA protocol in both numerical and analytic ways\footnote{Although we also perform  Monte Carlo simulations, we do not present the results in this paper as we confirm that our simulation results are exactly the same to the numerical evaluations.}. Throughout the performance evaluations, we consider the following system setup: the ratios of uplink and downlink transmit powers ($p_u$ and $p_d$, respectively) to the unit noise variance are set
to 10 dB and 20 dB, respectively. This asymmetric power allocation is due to the practical consideration that the power at the user side may be limited. For the downlink transmission, we consider binary random sequences, i.e $\mathbf{q}_k \in \{-1, 1\}^{N_d}$ for $k \in \mathcal{K}$. While binary sequences are considered in this work, the results can be readily extended to sequences with larger alphabet symbols since all the derivations are in general forms. The large scale fading factors, i.e. $\beta_k$ and $\beta^e_\ell$, are set to one for all $k, \ell \in \mathcal{K}$. Finally, the number of symbols in the uplink training sequences is assumed to be $N_u =  100$. For the given setup, we will evaluate performances of the proposed SKA protocol with various combinations of design parameters, $M$, $K$, $N_d$, and $\delta$ and show how the parameters affect the performances.
\subsection{Estimation of $g^e_k$}\label{Estimation of PCA Parameters}
\ifCLASSOPTIONonecolumn
\begin{figure*}[h]
	\centering
		\subfigure[NMSE of $\hat{g}^e_k$ versus $M$ when ${N_d = 1,000}$.]
		{
			\includegraphics[width=\SubFigWidth]{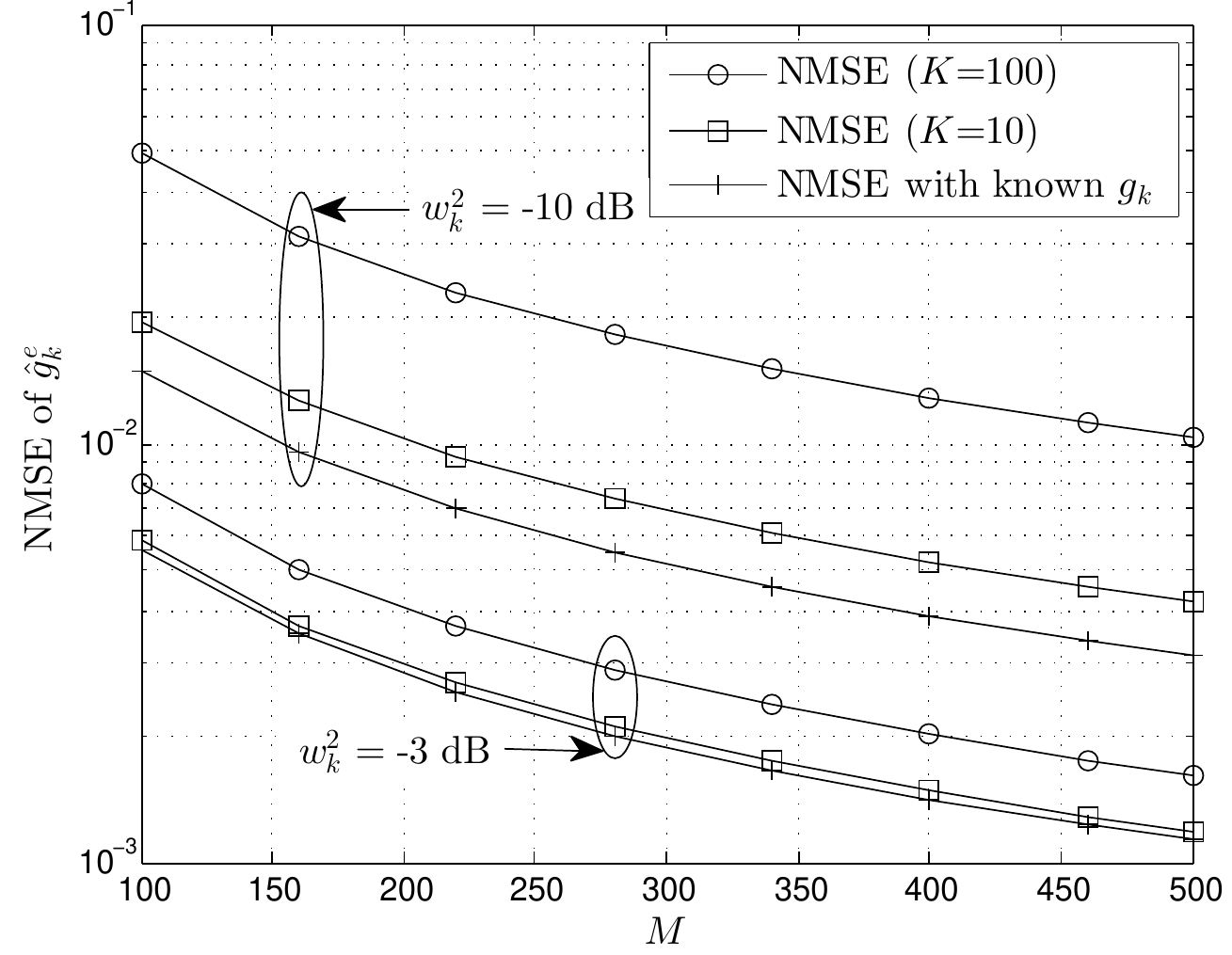}
			\label{Fig:MSE1}
		}
		\subfigure[NMSE of $\hat{g}^e_k$ versus $N_d$  when ${M=500}$.]
		{
			\includegraphics[width=\SubFigWidth]{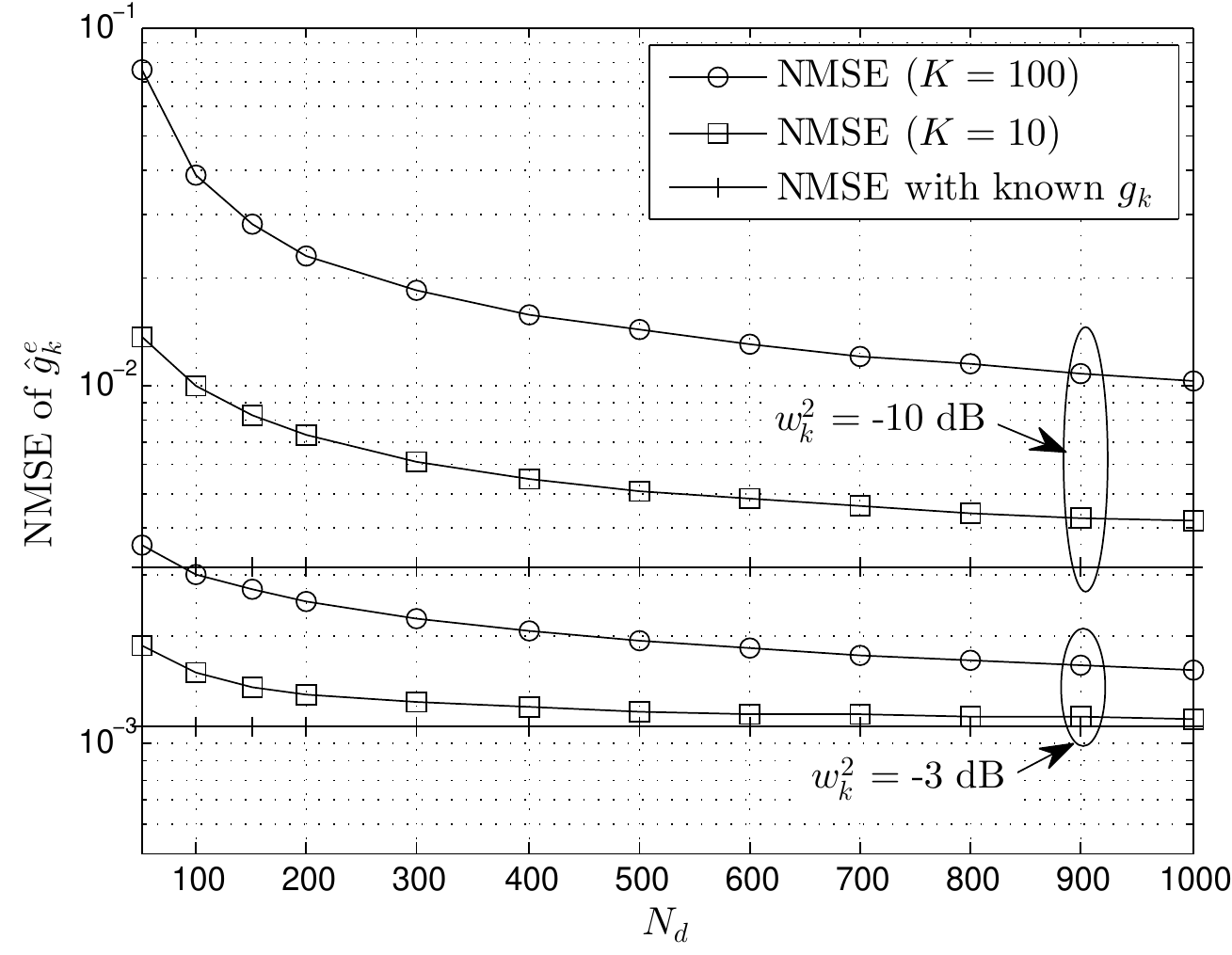}
			\label{Fig:MSE2}
		}
	\caption{NMSE of $\hat{g}^e_k$ with different values of $M$, $K$, and $N_d$. }
	\label{Fg:Fig2}
\end{figure*}
\else
\begin{figure*}[t]
	\centering
		\subfigure[NMSE of $\hat{g}^e_k$ versus $M$ when ${N_d = 1,000}$.]
		{
			\includegraphics[width=\SubFigWidth]{A_Fig1_1_MvsMSE_gel.pdf}
			\label{Fig:MSE1}
		}
		\subfigure[NMSE of $\hat{g}^e_k$ versus $N_d$  when ${M=500}$.]
		{
			\includegraphics[width=\SubFigWidth]{A_Fig2_1_NdvsMSE_gel.pdf}
			\label{Fig:MSE2}
		}
	\caption{NMSE of $\hat{g}^e_k$ with different values of $M$, $K$, and $N_d$. }
	\label{Fg:Fig2}
\end{figure*}
\fi
We first evaluate performances of the estimator $\hat{g}^e_k$ in \eqref{eq:MMSE_gek2} in terms of the normalized MSE (NMSE) defined as
\begin{equation}\label{eq:NMSE}
	\text{NMSE} = \frac{\mathbb{E}[(|g^e_k - \hat{g}^e_k|)^2]}{\mathbb{E}[|g^e_k|^2]}.
\end{equation}
As a reference, we consider the case that $g_k$ is perfectly known to Bob. Then, the NMSE becomes
\begin{equation} \label{eq:LBNMSE}
{\rm NMSE}_{\rm ideal}
=
 \frac{1}{\mathbb{E}[|g^e_k|^2]} \frac{1}{(1+ w_k^2 c_k)M},
\end{equation}
since the perfect knowledge of $g_k$ makes $e_{k,1}$ in \eqref{Structure of MMSE estimator of ge} zero.  Thus, the 
ideal NMSE in \eqref{eq:LBNMSE} is a lower bound on the NMSE of $\hat{g}^e_k$ which is taken as a yardstick in the performance evaluations.  The NMSE evaluations are carried out with the analytic expression of the MSE in \eqref{eq:MSE_gek} and the estimate of $w_k$ in \eqref{eq:MLE_wk}. To confirm the results, we also find an empirical expectation for the NMSE by conducting the estimations of $g^e_k$ and $w_k$ $10^5$ times at each point in Figs. \ref{Fig:MSE1} and \ref{Fig:MSE2}. The evaluations from the two different approaches are completely overlapped each other, which confirms the derivations in Section \ref{Sec:EstInfLeak}.

The evaluations of NMSE versus $M$ at different combinations of $K$ and $w_k$ are depicted in Fig. \ref{Fig:MSE1} where the NMSE decreases as either of $M$ and $w_k$ increases as predicted by the closed-form expression 
for the MSE in \eqref{eq:MSE_gek}. The results in Fig \ref{Fig:MSE1} imply that for a fixed number of antennas, $M$, Bob achieves a better estimation of Eve's channel $g^e_k$ when Eve attempts a stronger attack, i.e. a larger PCA power $w^2_k$ in hopes of eavesdropping more information about the communication between legitimate parties. On the other hand, the increase
of the number of users, $K$, induces more interference to Bob and thus degrades the NMSE. In Fig. \ref{Fig:MSE2}, the NMSE evaluations are performed with respect to $N_d$ at $M=500$, which shows the same trends as the ones in Fig. \ref{Fig:MSE1}. It is also noticed that the lower bound on NMSE looks independent of $N_d$, i.e. the length of $\mathbf{q}_k$ since $N_d$ affects only the results of $\hat{g}_k$ as shown in \eqref{eq:gk}, and the lower bound already assumes the true value of $g_k$. The results in Fig. \ref{Fig:MSE2} show that the performances of $\hat{g}^e_k$ eventually approaches the lower bounds as $N_d$ increases since the more samples provide the better degree of accuracy in the estimation of $g_k$\cite{Poor_1994, ChoiBook}.

\subsection{Average Secrecy Outage Probability}\label{Outage Probability of Secrecy Rate}
\begin{figure*}[t]
	\centering
		\subfigure[$\bar{P}_{\rm out}(w_k,\delta) $ versus $M$ when $N_d = 1,000$.]
		{
			\includegraphics[width=\SubFigWidth]{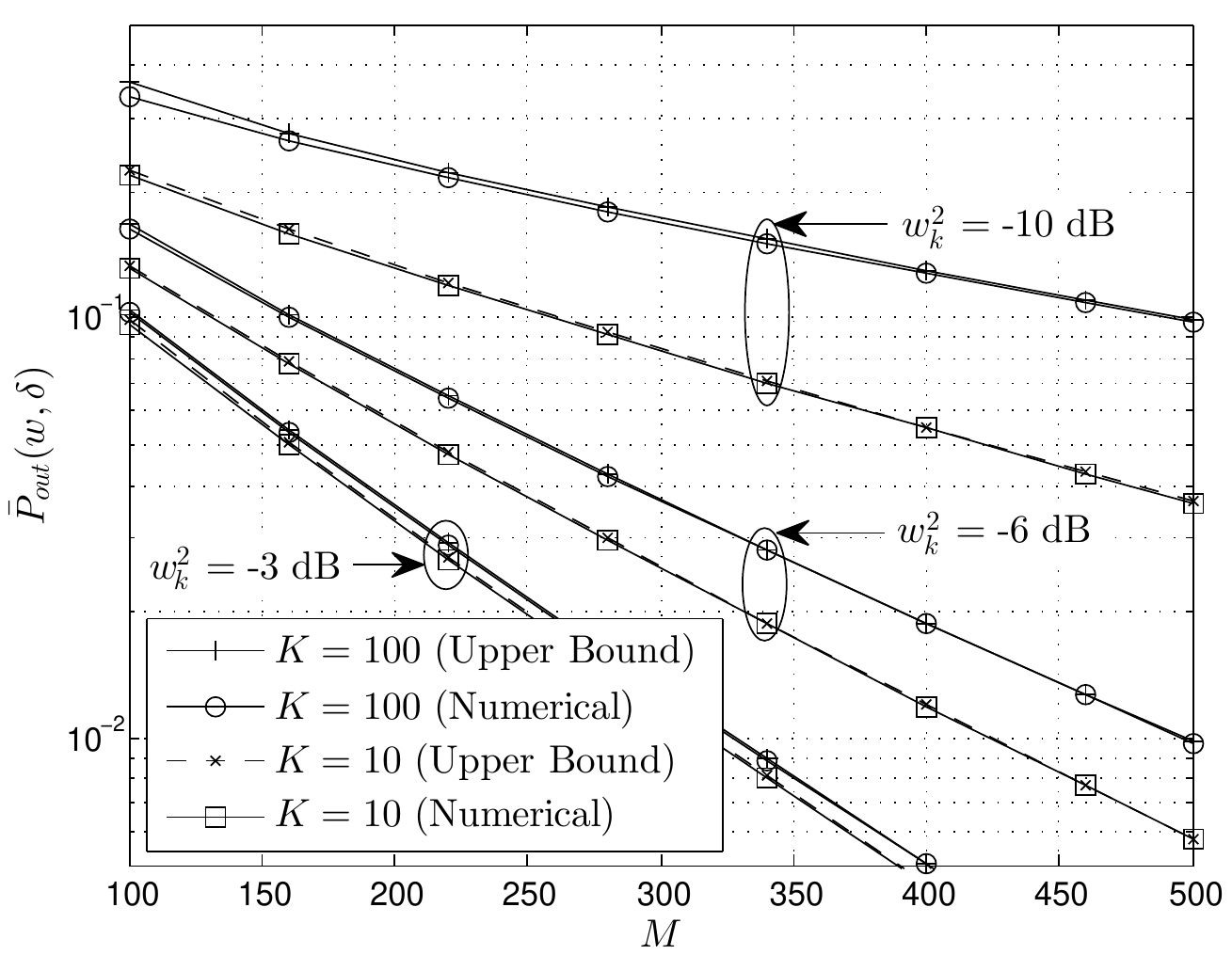}
			\label{Fig:OutageProb1}
		}
		\subfigure[$\bar{P}_{\rm out}(w_k,\delta) $ versus $N_d$ when $M=500$.]
		{
			\includegraphics[width=\SubFigWidth]{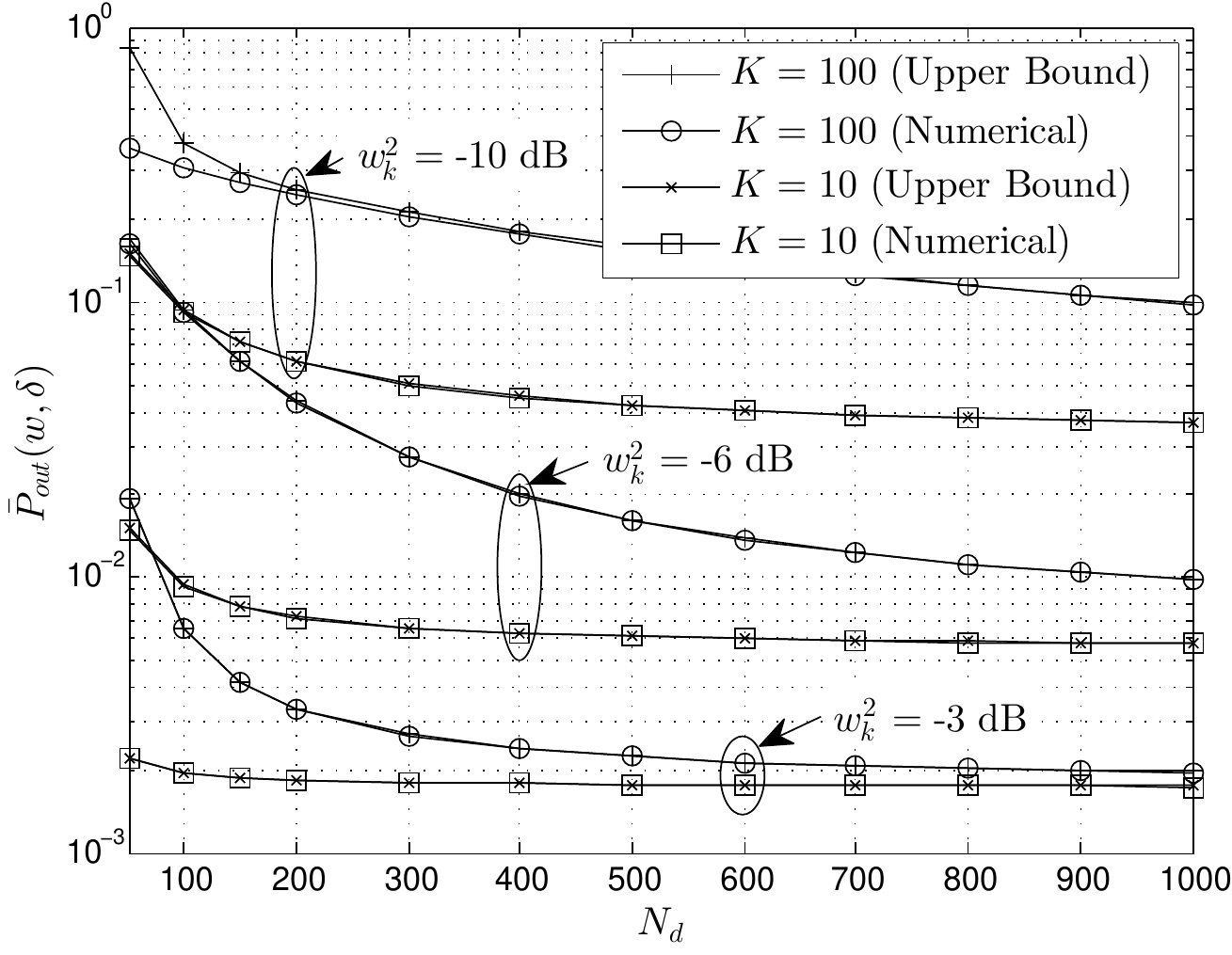}
			\label{Fig:OutageProb3}
		}
	\caption{Average secrecy outage probability, $\bar{P}_{\rm out}(w_k,\delta) $, with different parameters for a fixed  $\delta=0.1$.}
	\label{Fg:OutageProb}
\end{figure*}
In this subsection, we present average secrecy outage probability of the proposed SKA protocol. As a performance benchmark, we also plot the upper bound 
on $\bar{P}_{\rm out}(w_k,\delta) $ by introducing another upper bound on the Marcum $Q$-function: \eqref{OPforGivenCBSimpler}:
\ifCLASSOPTIONonecolumn
\begin{align}\label{UB2ofOP}
	Q_1(a,b)&\leq \frac{I_0(ab)}{\exp(ab)}
			  \left\{\exp\left[-\frac{(b-a)^2}{2}\right] + 
			  		 a\sqrt{\frac{\pi}{2}}\text{erfc}\left(\frac{b-a}{\sqrt{2}}\right)							  \right\},
\end{align}
\else
\begin{align}\label{UB2ofOP}
	&Q_1(a,b) \nonumber \\
	&\leq \frac{I_0(ab)}{\exp(ab)} 
			  \left\{\exp\left[-\frac{(b-a)^2}{2}\right] + 
			  		 a\sqrt{\frac{\pi}{2}}\text{erfc}\left(\frac{b-a}{\sqrt{2}}\right)							  \right\},
\end{align}	
\fi
where $\text{erfc}(x)=\frac{1}{\pi}\int^\infty_x \exp\left(-t^2\right)dt$.
Note that although the bound in \eqref{UB2ofOP} is less insightful than 
that in \eqref{UB1ofOP}, it provides a tighter upper bound than 
that in \eqref{UB1ofOP} \cite{Simon00Exponential,Corazza_PoutUB}.

Fig. \ref{Fg:OutageProb} depicts the average secrecy outage probability, $\bar{P}_{\rm out}(w_k,\delta) $, with respect to $M$ and $N_d$  for different values of $w_k$ and $K$ when $\delta=0.1$. 
The results in Fig. \ref{Fig:OutageProb1} show that $\bar{P}_{\rm out}(w_k,\delta) $  decreases exponentially fast with $M$ as expected from \eqref{UBOnOP2}. It is also observed that a lower $\bar{P}_{\rm out}(w_k,\delta) $ is
achievable  as $w_k$ increases, i.e. a stronger PCA, which is due to the fact that a larger $w_k$ allows the proposed estimator $\hat{g}^e_k$ to have a smaller MSE as derived in \eqref{eq:MSE_gek}, and thereby we can reduce the occurrence of secrecy outage events. The secrecy outage probability also decreases as $N_d$ increases as shown in Fig. \ref{Fig:OutageProb3}, which is due to 
a better estimate of $g_k$ as observed in Fig. \ref{Fig:MSE2}. For a large $N_d$, $\bar{P}_{\rm out}(w_k,\delta) $ eventually converges to an certain value which corresponds to the average outage probability when the true value of $g_k$ is revealed to the estimator of $g^e_k$. 

\begin{figure*}[t]
	\centering
		\subfigure[Contour plot of $\bar{P}_{\rm out}(w_k,\delta) $ versus $M$ and $K$ when ${N_d = 1,000}$ and ${\delta = 0.1}$.]
		{
			\includegraphics[width=\SubFigWidth]{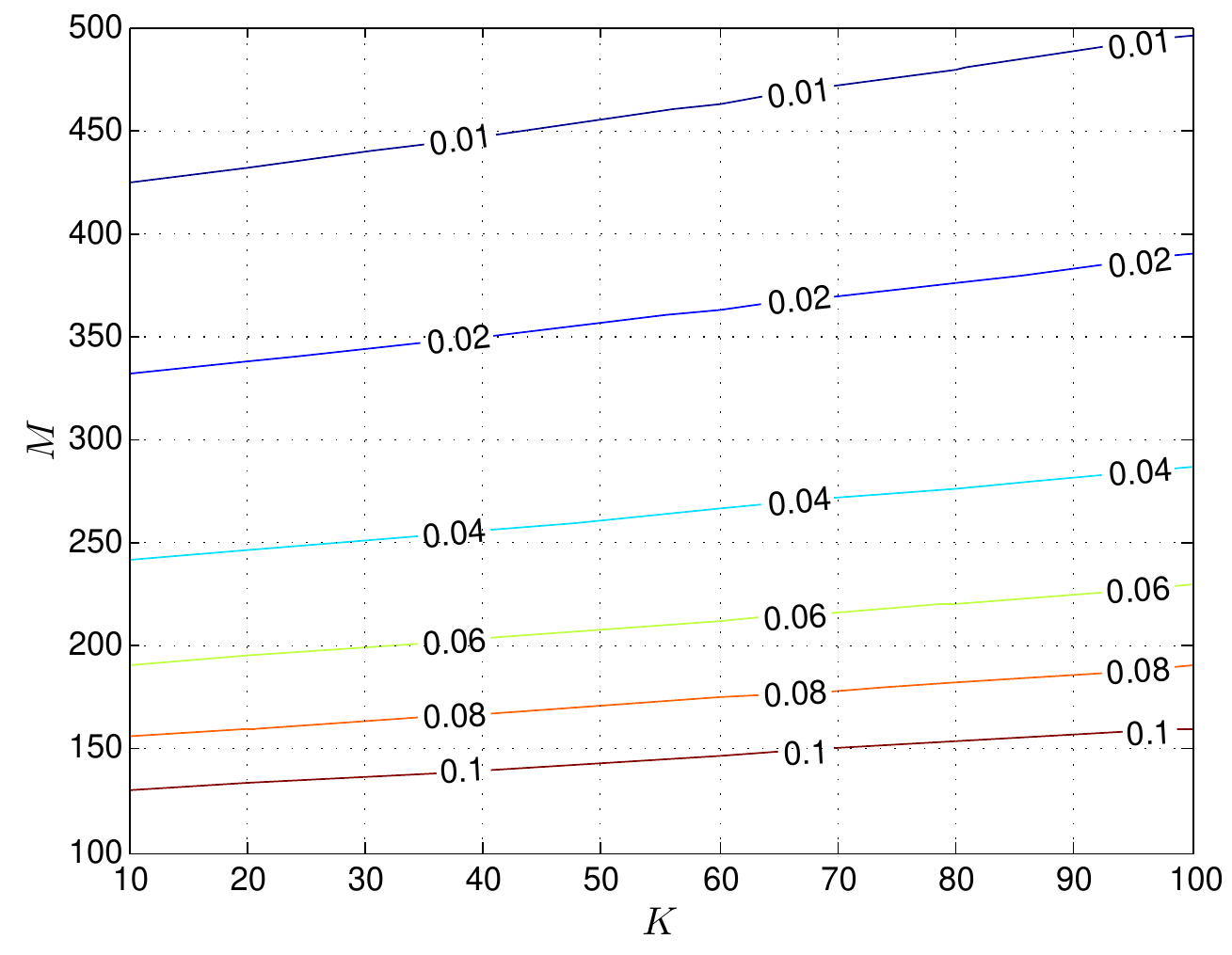}
			\label{Fig:Contour1}
		}
		\subfigure[Contour plot of $\bar{P}_{\rm out}(w_k,\delta) $ versus $M$ and $N_d$ when ${K=100}$ and ${\delta=0.1}$.]
		{
			\includegraphics[width=\SubFigWidth]{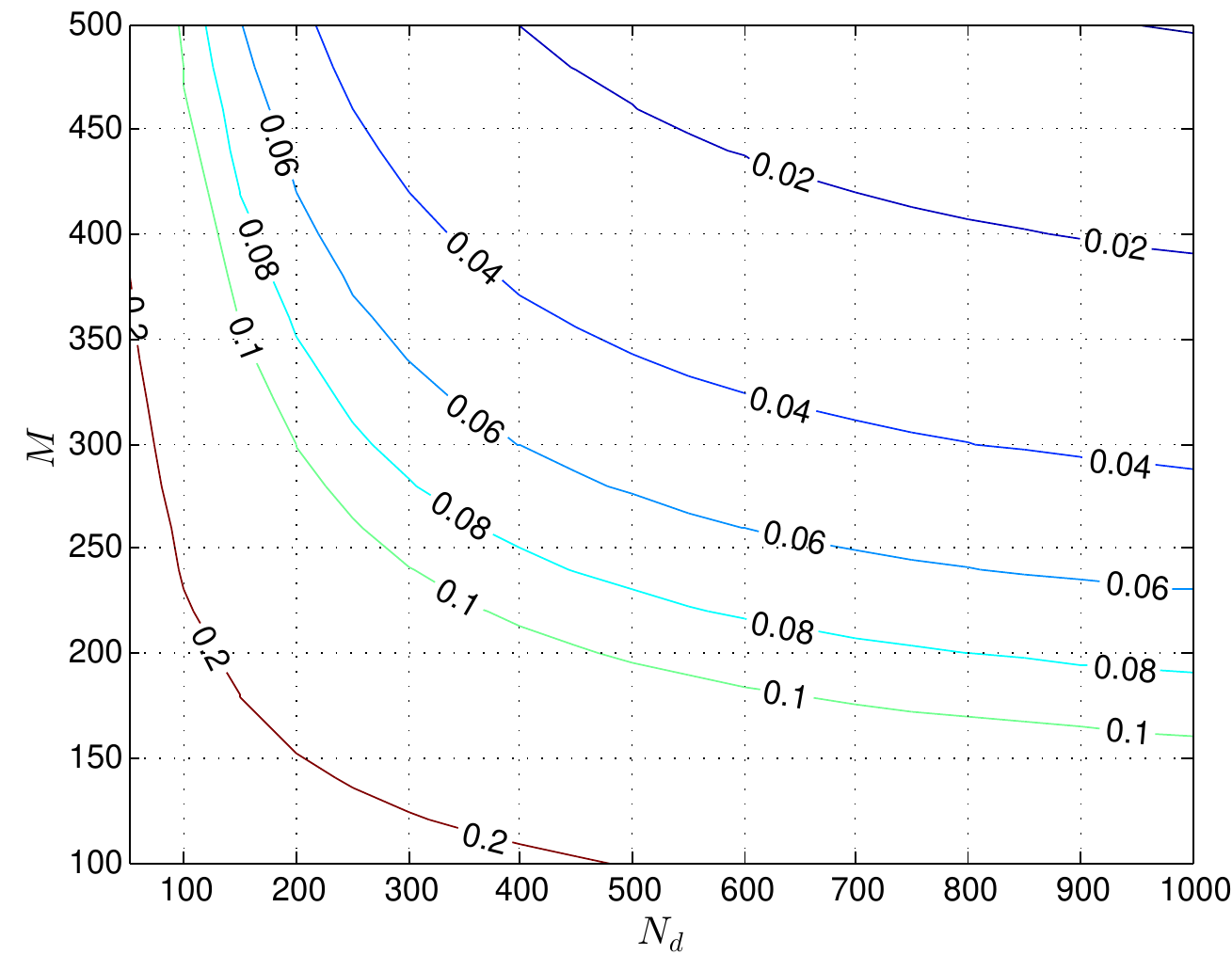}
			\label{Fig:Contour2}
		}
		\subfigure[Contour plot of $\bar{P}_{\rm out}(w_k,\delta) $ versus $M$ and $\delta$ when ${K=100}$ and $ {N_d = 1,000}$.]
		{
			\includegraphics[width=\SubFigWidth]{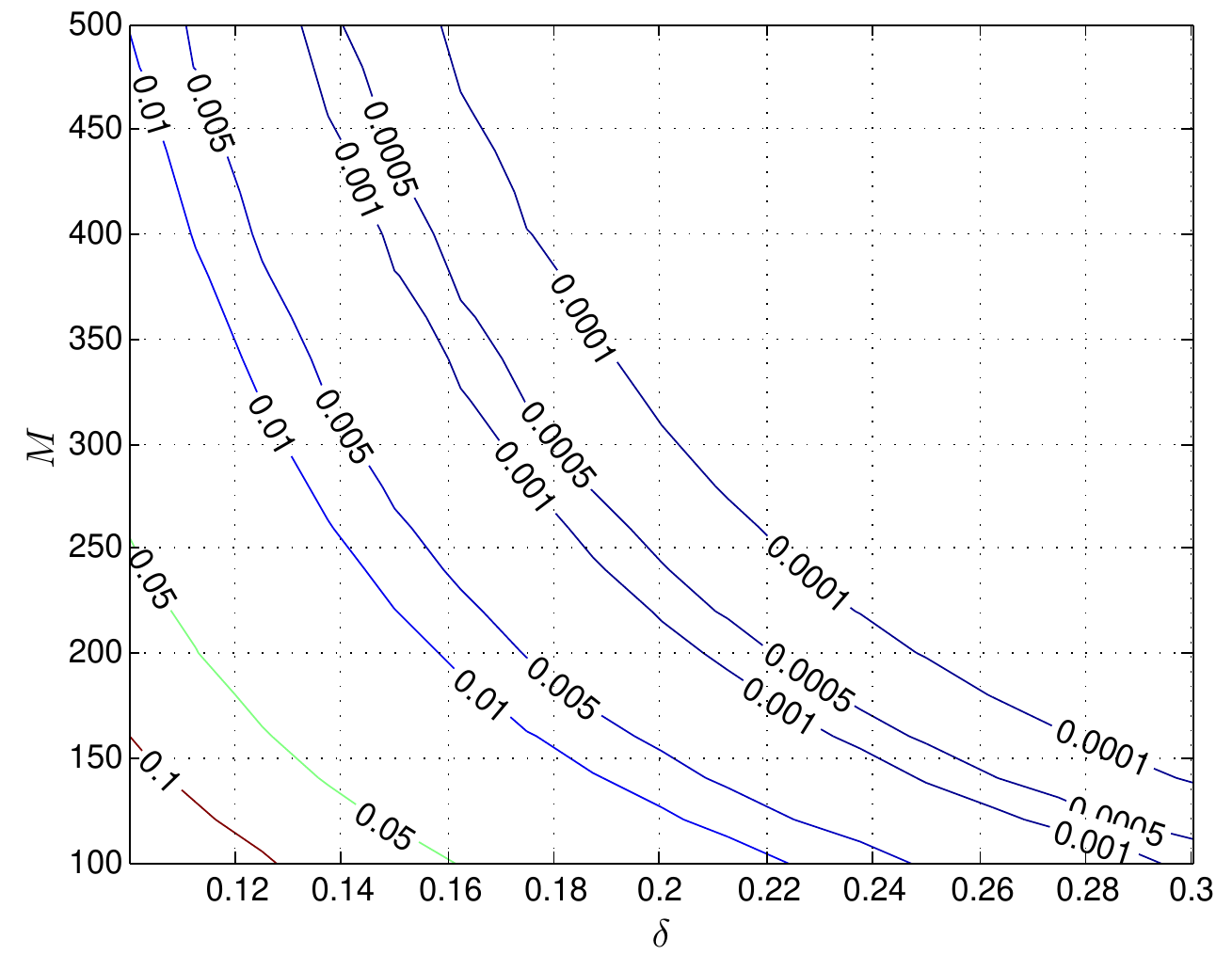}
			\label{Fig:Contour3}
		}
	\caption{Contour plot of $\bar{P}_{\rm out}(w_k,\delta) $ when $w_k^2 = -6 $ dB}
	\label{Fg:ContourPlots}
\end{figure*}
 
We now see trade-off relations among different parameters, $M$, $K$, $N_d$, and $\delta$, in the average secrecy outage probability by investigating the contour plots of $\bar{P}_{\rm out}(w_k,\delta) $ in Fig. \ref{Fg:ContourPlots} where $w_k^2$ is set to $-6$dB. The results provide a useful reference that enables system designers to select appropriate combinations of system parameters to meet various system requirements. The results in Fig. \ref{Fig:Contour1} show that the number of antennas $M$ must be almost linearly increased to compensate for the growing number of users $K$ to achieve a target average secrecy outage probability. The tradeoff between $M$ and $K$ is due to the results in \eqref{UBOnOP2} where the exponent is inversely proportional to $M/K$ for a large $N_d$. The average secrecy outage probability is also analyzed with respect to $M$ and $N_d$ in Fig. \ref{Fig:Contour2} where it is observed that the performance loss caused by employing a smaller size of the LAA can be compensated for by increasing $N_d$ to some extent and vice versa. However, as noticed in Fig. \ref{Fig:Contour2} the impact of $N_d$ on $\bar{P}_{\rm out}(w_k,\delta) $ is saturated fast with growing $N_d$. Thus, increasing $M$ is a more effective way to reduce $\bar{P}_{\rm out}(w_k,\delta) $ when $N_d$ is already large enough. Finally, we see the variations of average secrecy outage probability with respect to the secrecy margin, $\delta$ in Fig. \ref{Fig:Contour3} where it is observed that a small change of $\delta$ can exert a large influence on $\bar{P}_{\rm out}(w_k,\delta)$ since the average outage probability is decreases exponentially fast with respect to the square of the secrecy margin as shown in \eqref{UBOnOP2}. However, it should be noted that the smaller outage probability is achieved at the expense of the secret key length. We will discuss this issue  in detail in the next subsection.

\subsection{Average Length of Extracting Secret Key}\label{Average Secrecy Rate}
\begin{figure*}[t]
	\centering
		\subfigure[${R}_s(w_k, \delta)$ versus $M$.]
		{
			\includegraphics[width=\SubFigWidth]{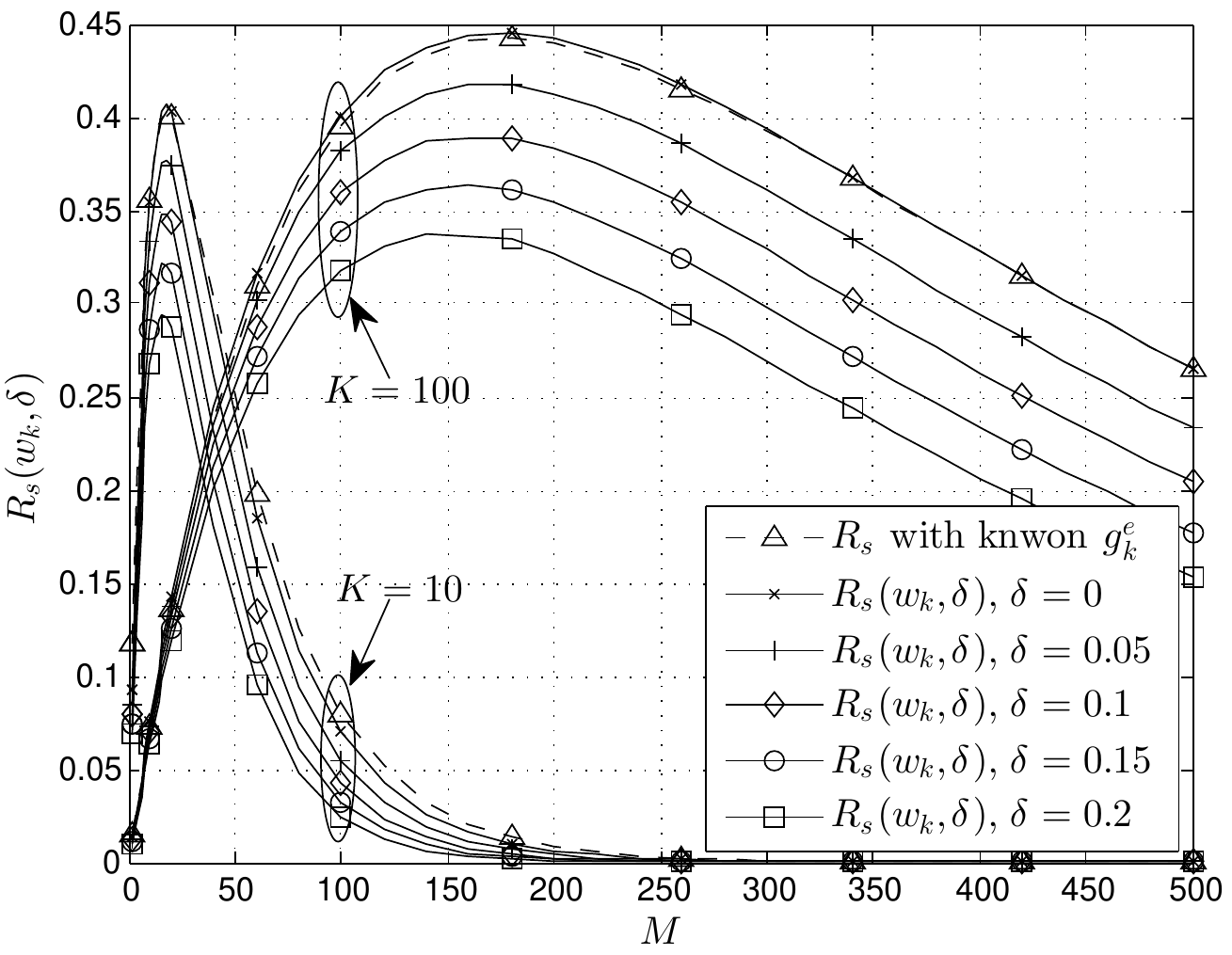}
			\label{Fig:AvgSKRs} 
		}\subfigure[$\bar{P}_{\rm out}(w_k,\delta) $ versus $M$ with different $\delta$.]
		{
			\includegraphics[width=\SubFigWidth]{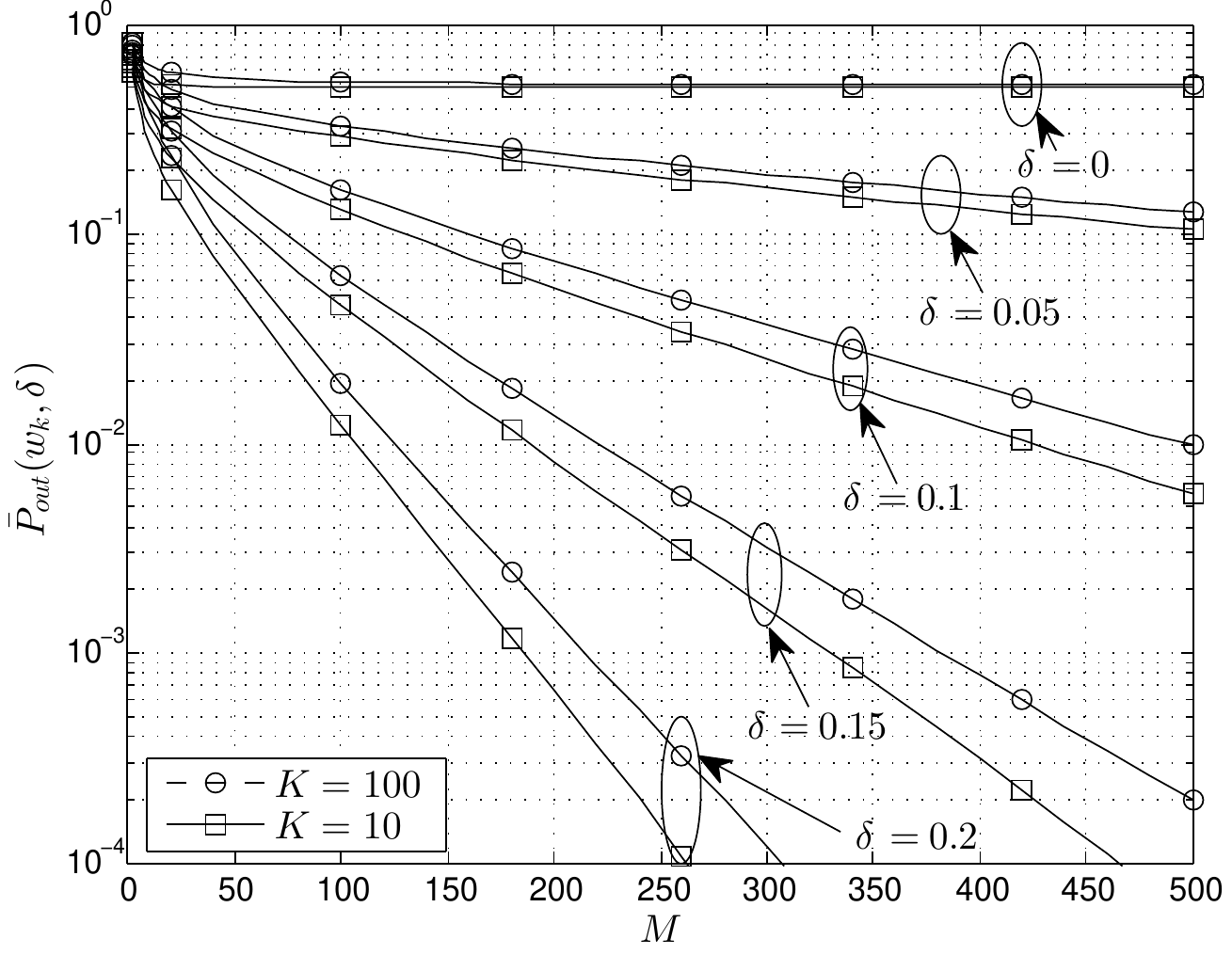}
			\label{Fig:OPsCPtoAvgSKRs} 
		}
	\caption{${R}_s(w_k, \delta)$ and $\bar{P}_{\rm out}(w_k,\delta) $ with respect to $M$ for $N_d = 1,000$ and $w_k^2 = -6$ dB. }
	\label{Fg:AvgSKRandOP} 
\end{figure*}

In this subsection, we evaluate the average length of secret key in \eqref{eq:Key3} with respect to different values of $M$ and $\delta$ when $N_d = 1,000$ and $w^2_k=-6$ dB. 
Note that $a_k$ and $b_k$ in \eqref{eq:Key3} become negligible for a sufficiently large $N_d$ \cite{Bloch08Wireless}. 
Hence, for simplicity, we only evaluate  ${R}_s(w_k, \delta) = \mathbb{E} \{ [I ( Q_k ; R_k ) - I ( Q_k ; \hat{R}^e_k )]^+ \}$, where the expectation is taken over the joint pdf of $f( \mathbf{r}_k , \mathbf{q}_k , \zeta_k ;w_k)$. We also evaluate the average secret key length when the eavesdropper's channel, $g^e_k$ is perfectly known to Bob, ${R}_s(w_k) = \mathbb{E} \{ [I ( Q_k ; R_k ) - I ( Q_k ; {R}^e_k )]^+ \}$ as a performance benchmark, which elucidates the performance loss due to the error in the estimation of $g^e_k$. 

The average secret key lengths ${R}_s(w_k, \delta)$ and average secrecy outage probabilities $\bar{P}_{\rm out}(w,\delta)$ are evaluated in Figs. \ref{Fig:AvgSKRs} and  \ref{Fig:OPsCPtoAvgSKRs}, respectively, with respect to $M$ for different values of $K$ and $\delta$. It is noticed that there exists a fundamental trade-off between ${R}_s(w_k, \delta)$ and $\bar{P}_{\rm out}(w_k,\delta)$. That is, if we increase $\delta$ to achieve a lower $\bar{P}_{\rm out}(w,\delta)$, we have accordingly a smaller ${R}_s(w_k,\delta)$. We can also observe that ${R}_s(w_k, \delta)$ with  $\delta = 0$ achieves almost the same performance of ${R}_s(w_k)$ in Fig. \ref{Fig:AvgSKRs}. This result implies that the proposed estimator $\hat{g}^e_k$ produces an estimate very close to its true value, $g^e_k$. However, in Fig. \ref{Fig:OPsCPtoAvgSKRs}, the corresponding $\bar{P}_{\rm out}(w_k,\delta) $  approaches almost 0.5. Note that the estimation result from the MMSE estimator for $g^e_k$ follows a Gaussian distribution \cite{Poor_1994, ChoiBook}, and thus $|\hat{g}^e_k|$ follows a Rician distribution. While the Rician distribution is not symmetric with respect to its true value $g^e_k$, it becomes symmetric as $M$ increases. Thus, for all sufficiently large $M$, $P_{\rm out} = \Pr(|\hat{g}^e_k| < |g^e_k|) = \Pr(|\hat{g}^e_k| \ge |g^e_k|) = 0.5$. However, for some small $M$ values, the asymmetry of the Rician distribution makes $P_{\rm out}$ larger than 0.5 as shown in Fig. \ref{Fig:OPsCPtoAvgSKRs}. However, the results in \eqref{UBOnOP} tell that $\bar{P}_{\rm out}(w,\delta)$ decreases exponentially fast with increasing $\delta$, and thus a small sacrifice of ${R}_s(w_k, \delta)$ is well paid off by a significant improvement of $\bar{P}_{\rm out}(w,\delta)$. 

In Fig. \ref{Fig:AvgSKRs}, it is also noticed that ${R}_s(w_k,\delta)$ has a peak after which ${R}_s(w_k,\delta)$ decreases with growing $M$. This happens due to the fact that both $I(Q_k;R_k)$ and $I(Q_k;R^e_k)$ can not exceed 1 bit per channel-use (bpcu) and are proportional to the size of the LAA as expected from \eqref{eq:SINR}. Thus, as $M$ grows, the ${R}_s(w_k,\delta)$ has a maximum value and later becomes diminished as both $I(Q_k;R_k)$ and $I(Q_k;R^e_k)$ approach 1 bpcu.
 
\section{Conclusions}\label{Sec:Con}
We studied an SKA protocol with LAA for a multi-user TDD 
system in the presence of multiple eavesdroppers attempting 
the PCA. By exploiting the complementary relation between the received signal 
strengths at the eavesdropper and its target user, 
an estimator was derived to measure the EDCG from 
the BS to the eavesdropper. From an estimated
EDCG, the amount of information leakage was quantified,
which was used to adaptively adjust 
the length of extract secret key.
Extensive performance evaluations have been carried out 
in both numerical and analytic ways. 
We showed that even in the case that the eavesdropper can manipulate
the LAA by the PCA, we can still take 
advantage of the LAA in the estimator and extract 
a certain length of secret keys with an arbitrary low 
secrecy outage probability. As future research directions, 
we will study a coordinated protocol to detect the PCA in 
a multi-cell scenario, where the reuse of 
the same training sequence across cells can be mis-identified as the PCA.
In addition, the research will further proceed to the cases when the channels between legitimate parities and eavesdropper are correlated, multiple antennas are employed in eavesdroppers, and the channel reciprocity does not hold due to the channel variations over time for practical considerations. 
\appendices
\section{}\label{App:A}
Consider 
two $M \times 1$ independent random vectors $\mathbf{x}=[x_1, \cdots, x_M]^T$ and $\mathbf{y}=[y_1, \cdots, y_M]^T$ that are zero-mean CSCG with covariance matrices $\sigma_x^2 \mathbf{I}_M$ and $\sigma_y^2 \mathbf{I}_M$, respectively. 
Then, the $m$-th component of $\mathbf{y}$ can be rewritten by $y_m = r_m e^{j \phi_m}$ for $m = 1, \cdots, M$ where $r_m \in [0,\infty)$ and $\phi_m \in [-\pi,\pi)$ follow Rayleigh and uniform distributions, respectively.
Then, it can be shown 
\begin{align*}
	t= \mathbf{x}^\dag \frac{\mathbf{y}}{\left\| \mathbf{y} \right\|}=\tilde{x}_1 \tilde{r}_1+\cdots +\tilde{x}_M \tilde{r}_M,
\end{align*}
where $\tilde{x}_m =x_m {e}^{j\phi_m}$ and $\tilde{r}_m=\frac{ r_m }{\sqrt{ r_1^2+\cdots + r_M^2}}$. Note that $\tilde{x}_m$ has the same distribution as $x_m$ due to the circularly-symmetric property. The pdf of $t$ is given by 
\ifCLASSOPTIONonecolumn 
\[
	 f (t)  = \int f(t | \tilde{\mathbf{r}}) f(\tilde{\mathbf{r}}) d\tilde{\mathbf{r}} 
	   \mathop = \limits^{(a)} \int \frac{1}{\pi\sigma^2_x \sum^{M}_{m=1} \tilde{r}_m^2 } \exp \left(-\frac{ t^2 }{\sigma^2_x \sum^{M}_{m=1} \tilde{r}_m^2 } \right) f(\tilde{\mathbf{r}}) d\tilde{\mathbf{r}} 
	  \mathop = \limits^{(b)} \frac{1}{\pi\sigma^2_x} \exp \left( -\frac{ t^2 }{\sigma^2_x} \right),
\]
\else 
\begin{align*}
	 f (t) 
	 &= \int f(t | \tilde{\mathbf{r}}) f(\tilde{\mathbf{r}}) d\tilde{\mathbf{r}} \nonumber \\
	 &  \mathop = \limits^{(a)} \int \frac{1}{\pi\sigma^2_x \sum^{M}_{m=1} \tilde{r}_m^2 } \exp \left(-\frac{ t^2 }{\sigma^2_x \sum^{M}_{m=1} \tilde{r}_m^2 } \right) f(\tilde{\mathbf{r}}) d\tilde{\mathbf{r}} \nonumber \\
	 & \mathop = \limits^{(b)} \frac{1}{\pi\sigma^2_x} \exp \left( -\frac{ t^2 }{\sigma^2_x} \right),
\end{align*}
\fi
where $(a)$ is due to the fact that $t$ can be seen as the summation of independent complex Gaussian random variables $\{ \tilde{x}_m \}$ for given $\tilde{\mathbf{r}} = [\tilde{r}_1, \cdots, \tilde{r}_M]^T $, and $(b)$ is due to the fact that $ \sum^{M}_{m=1} \tilde{r}_m^2 = 1$.

\section{}\label{App:SINR}
Since we normalize the average power of $\mathbf{q}_k$ to one, we have $\text{SINR}_k = \frac{ \mathbb{E} \left[ |g_k |^2   \right] }{ \sigma_{n_k}^2}$ that is determined by the distribution of $g_k$. 
Based on the orthogonality principle of the MMSE estimation \cite{Poor_1994, ChoiBook}, we can rewrite $g_k$ as follows: 
\begin{align} \label{eq:SINR1}
	g_k 
	  = \frac{\mathbf{h}_k^\dag \mathbf{a}_k}{\sqrt{M}}  
	  \mathop = \limits^{(a)} \frac{( \hat{\mathbf{h}}_k + \mathbf{e}_k )^\dag }{\sqrt{M}} \frac{\hat{\mathbf{h}}_k }{||\hat{\mathbf{h}_k}||} ,
\end{align} 
where $(a)$ is from $\mathbf{a}_k  = \frac{\mathbf{y}_k}{\zeta_k} = \frac{\hat{\mathbf{h}}_k}{|| \hat{\mathbf{h}}_k||} $, and $\mathbf{e}_k$ is the estimation error of the MMSE estimation. 
Note that we have  $\hat{\mathbf{h}}_k \sim \mathcal{CN} (\mathbf{0}_M, \frac{c_k}{1+ (1+w_k^2)c_k} \mathbf{I}_M)$ and $ \mathbf{e}_k \sim \mathcal{CN} (\mathbf{0}_M, \frac{1 + w_k^2 c_k}{1+ (1+w_k^2)c_k} \mathbf{I}_M)$ from the MMSE property \cite{Poor_1994, ChoiBook}. 
Then, $g_k$ in \eqref{eq:SINR1} can be rewritten by  
\begin{align*} 
	g_k 
	  = \frac{ 1 }{\sqrt{M}} \left( || \hat{\mathbf{h}}_k || +  \mathbf{e}_k ^\dag \frac{ \hat{\mathbf{h}_k} }{||\hat{\mathbf{h}_k}||} \right) 
	  \mathop = \limits^{(a)} \frac{ 1 }{\sqrt{M}} \left( || \hat{\mathbf{h}}_k || +  e_k \right) ,
\end{align*} 
where $(a)$ is from Appendix \ref{App:A}, and $e_k \sim \mathcal{CN} (0, \frac{1+ w_k^2 c_k}{1+ (1+w_k^2) c_k})$.
We notice that due to the orthogonality principle of the MMSE estimation, we have $\mathbb{E} [|| \hat{\mathbf{h}}_k ||   e_k] =0 $. 
Thus, we have $\mathbb{E}[ |g_k |^2] = \frac{1}{M} \mathbb{E} [ | || \hat{\mathbf{h}}_k || +  e_k |^2 ]  = \frac{1}{M} \mathbb{E} [ || \hat{\mathbf{h}}_k ||^2 ] + \frac{1}{M} \mathbb{E} [ | e_k |^2 ]$. 
Since both $ \hat{\mathbf{h}}_k $ and $  e_k$ follow the complex Gaussian distribution, we have $\mathbb{E} [ || \hat{\mathbf{h}}_k ||^2 ] = \frac{ c_k}{2\{1+ (1+w_k^2)c_k\}} \mathbb{E}[\mathcal{X}^2_{2M}]$ and $ \mathbb{E} [ | e_k |^2 ] = \frac{1 + w_k^2 c_k}{2\{ 1+ (1+w_k^2)c_k\} } \mathbb{E}[\mathcal{X}^2_{2}]$, where $\mathcal{X}^2_m$ is a chi-square random variable with $m$ degrees of freedom whose first order moment is $\mathbb{E}[\mathcal{X}^2_{m}]=m$. 
Thus, we have $\mathbb{E} [ |g_k |^2 ] = \frac{ c_k  }{1+ (1+w_k^2)c_k} + \frac{1 + w_k^2 c_k}{ M\{1+ (1+w_k^2)c_k \}}	= \frac{M c_k + w_k^2 c_k +1 }{M\{ 1 + (1+w_k^2) c_k \} } $.
Then, from \eqref{EQ:sn}, we finally have 
\ifCLASSOPTIONonecolumn
\begin{align*}
	\text{SINR}_k 
	= \frac{M c_k + w_k^2 c_k +1 }{( 1 + (1+w_k^2) c_k ) ( K-1 + \frac{1}{p_d \beta_k})} .
\end{align*} 
\else
\begin{align*}
	\text{SINR}_k 
	= \frac{M c_k + w_k^2 c_k +1 }{(1 + (1+w_k^2) c_k ) ( K-1 + \frac{1}{p_d \beta_k})} .
\end{align*} 
\fi
We omit the derivation of $\text{SINR}^e_k$ since we can derive $\text{SINR}^e_k$ in the same manner.

\section{}\label{App:Limit_EDCG}
Let us first find the asymptotic behavior of  $f (g_k | \zeta_k ; w )$.
In Appendix \ref{App:pdfs}, we have $f (g_k | \zeta_k ; w ) \sim \mathcal{CN} (\mu_{g_k}, \sigma^2_{g_k})$. 
It is obvious that $\lim_{M\to \infty} \sigma^2_{g_{k}} =0$, while the limiting value of $\mu_{g_k}$ is given by  
\ifCLASSOPTIONonecolumn 
\begin{align} \label{eq:Limit_EDCG}
	\lim_{M\to \infty} \mu_{g_k} 
	&= \lim_{M \to \infty} \frac{\zeta_k }{\sqrt{M}} \left\{ \frac{\sqrt{c_k}}{1 + (1 + w^2_k) c_k} \right\} 
	 = \lim_{M \to \infty} \frac{ 1 }{ \sqrt{M} } \sqrt{  \sum\nolimits_{m = 1}^{M} \left| y_{km} \right|^2} \left\{ \frac{\sqrt{c_k}}{1 + (1 + w^2_k) c_k} \right\} \nonumber\\
	& \mathop = \limits^{(a)} \sqrt{ \frac{c_k}{1 + (1 + w^2_k) c_k}},
\end{align}
\else
\begin{align} \label{eq:Limit_EDCG}
	\lim_{M\to \infty} \mu_{g_k} 
	&= \lim_{M \to \infty} \frac{\zeta_k }{\sqrt{M}} \left\{ \frac{\sqrt{c_k}}{1 + (1 + w^2_k) c_k} \right\} \nonumber\\
	& = \lim_{M \to \infty} \frac{ 1 }{ \sqrt{M} } \sqrt{  \sum\nolimits_{m = 1}^{M} \left| y_{km} \right|^2} \left\{ \frac{\sqrt{c_k}}{1 + (1 + w^2_k) c_k} \right\} \nonumber\\
	& \mathop = \limits^{(a)} \sqrt{ \frac{c_k}{1 + (1 + w^2_k) c_k}},
\end{align}
\fi
where $y_{k m}$ is the $m$-th component of  $\mathbf{y}_k $, and $(a)$ is from the law of large numbers that, as $M$ goes to infinity, the sample variance of random variable converges to its true variance.
Therefore, for a given $\zeta_k$, $g_k \to \mu_{g_k}$ in probability  as ${M \to \infty}$. 
Then, the asymptotic behavior of $f (g_k ; w_k)$ for a large $M$ is given by  
\ifCLASSOPTIONonecolumn 
\begin{align*}
	\lim_{M \to \infty} f (g_k ; w_k)
	& = \lim_{M \to \infty} \int f (g_k | \zeta_k ; w_k ) f(\zeta_k ; w_k) d \zeta_k 
	  \mathop = \limits^{(a)} \int \lim_{M \to \infty}  f (g_k | \zeta_k ; w_k )  f(\zeta_k ; w_k) d \zeta_k  \\
	& \mathop = \limits^{(b)} \sqrt{ \frac{c_k}{1+(1 + w^2_k) c_k}} \text{ in probability},
\end{align*}
\else
\begin{align*}
	& \lim_{M \to \infty} f (g_k ; w_k) \nonumber\\
	& = \lim_{M \to \infty} \int f (g_k | \zeta_k ; w_k ) f(\zeta_k ; w_k) d \zeta_k  \\
	& \mathop = \limits^{(a)} \int \lim_{M \to \infty}  f (g_k | \zeta_k ; w_k )  f(\zeta_k ; w_k) d \zeta_k  \\
	& \mathop = \limits^{(b)} \sqrt{ \frac{c_k}{1+(1 + w^2_k) c_k}} \text{ in probability},
\end{align*}
\fi
where $(a)$ is due to the Lebesgue dominated convergence theorem, and $(b)$ is from \eqref{eq:Limit_EDCG}.

\section{}\label{App:pdfs}
In this Appendix, we derive various pdfs used in this paper\footnote{Throughout this paper, we do not include all details of the derivation if it involves simple calculations.}.
\subsection{pdfs of $f (g_k | \zeta_k ; w_k )$ and $f (g^e_k | \zeta_k ; w_k)$}
We first derive the pdf of $f (g_k | \zeta_k ; w_k )$.
To obtain $f (g_k | \zeta_k ; w_k )$, we first derive $f(\mathbf{h}_k | \mathbf{y}_k ;w_k)$ using the Baye's theorem, which is given by 
\begin{align} 
	f(\mathbf{h}_k | \mathbf{y}_k ; w_k) = \frac{ f(\mathbf{y}_k| \mathbf{h}_k ; w_k) f( \mathbf{h}_k)}{ f(\mathbf{y}_k ; w_k)},
	\label{eq:Bay}
\end{align} 
where, from \eqref{eq:y_k}, we have  $f(\mathbf{y}_k| \mathbf{h}_k ; w_k)  \sim {\mathcal{CN} (\sqrt{c_k} \mathbf{h}_k , (1+w_k^2 c_k) \mathbf{I}_M)}$, $f( \mathbf{h}_k )  \sim \mathcal{CN} ( \mathbf{0}_M ,  \mathbf{I}_M )$, and $ f(\mathbf{y}_k; w_k)  \sim \mathcal{CN} ( \mathbf{0}_M, \left( 1+ c_k + w_k^2 c_k \right) \mathbf{I}_M )$.
After some manipulations, we can derive the conditional pdf of $f(\mathbf{h}_k | \mathbf{y}_k ; w_k)$ from \eqref{eq:Bay} as follows:
\begin{align*}
	f(\mathbf{h}_k | \mathbf{y}_k ; w_k) \sim \mathcal{CN} \left( \frac{\sqrt{c_k}}{1 + c_k + w_k^2 c_k} \mathbf{y}_k,  \frac{ 1 + w_k^2 c_k }{ 1 + c_k + w_k^2 c_k }\mathbf{I}_{M} \right).
\end{align*}
Note that $g_k = \mathbf{h}_k^T \mathbf{a}_k / \sqrt{M}$ is the sum of scaled Gaussian random variables for  a given $\mathbf{y}_k$ since $\mathbf{a}_k$ becomes constant for a given $\mathbf{y}_k$. Then, we have $f(g_k | \mathbf{y}_k ; w_k) \sim \mathcal{CN} ( \mu_{g_k} , \sigma^2_{g_k} )$, where
\begin{align*}
	\mu_{g_k} = \frac{\sqrt{c_k}}{1 + c_k + w^2_k c_k}  \frac{ \norm {\mathbf{y}_k } }{\sqrt{M}} 
	\text{ and }
	\sigma^2_{g_k} = \frac{ 1 + w^2_k c_k }{(1 +c_k + w^2_k c_k ) M}.
\end{align*}
Finally, $f \left( g_k | \norm {\mathbf{y}_k } ; w_k \right) $ is derived from $ f \left( g_k | {\mathbf{y}_k } ; w_k\right)$ as follows:
\ifCLASSOPTIONonecolumn 
\begin{align*}
	f \left( g_k | \mathbf{y}_k ; w_k\right) 
		= f \left( g_k | \mathbf{y}_k, \norm {\mathbf{y}_k } ; w_k\right)
	 	\mathop =\limits^{(a)} f \left( g_k | \norm {\mathbf{y}_k } ;w_k \right),
\end{align*}
\else
\begin{align*}
	f \left( g_k | \mathbf{y}_k ; w_k\right) 
		& = f \left( g_k | \mathbf{y}_k, \norm {\mathbf{y}_k } ; w_k\right) \nonumber \\
	 	& \mathop =\limits^{(a)} f \left( g_k | \norm {\mathbf{y}_k } ;w_k \right),
\end{align*}
\fi
where $(a)$ is due to the fact that $\mu_{g_k}$ and $\sigma^2_{g_k}$ are independent of $\mathbf{y}_k$ for a given $\norm {\mathbf{y}_k }$. 
In the same manner, we have $f (g^e_k | \zeta_k ; w_k) \sim \mathcal{CN} (\mu_{g^e_k} , \sigma^2_{g^e_k})$, where
$
	\mu_{g^e_k} = \frac{w_k\sqrt{c_k}}{(1+ c_k + w^2_k c_k )M} \frac{\zeta_k}{\sqrt{M}} 
$ and $
	\sigma^2_{g^e_k} = \frac{1 + c_k }{ ( 1+c_k + w^2_k c_k ) M}.
$
\subsection{pdf of $f ( g^e_k | g_k , \zeta_k ; w_k ) $}
We first derive the joint pdf of $f (\mathbf{h}^e_k , \mathbf{h}_k , \mathbf{y}_k ; w_k )$. 
From the Baye's theorem, we have 
\begin{align*}
	f (\mathbf{h}^e_k , \mathbf{h}_k , \mathbf{y}_k  ; w_k ) 
	&= f ( \mathbf{h}_k , \mathbf{y}_k  |\mathbf{h}^e_k  ; w_k ) f (\mathbf{h}^e_k  ) ,
\end{align*}
where 
\ifCLASSOPTIONonecolumn 
$
f(\mathbf{h}_k, \mathbf{y}_k  | \mathbf{h}^e_k  ; w_k ) 
  \sim \mathcal{CN}\left( \bigl[\begin{smallmatrix} \mathbf{0}_M & w_k\sqrt{c_k} \mathbf{h}^e_k \end{smallmatrix} \bigr]  , \bigl[\begin{smallmatrix} \mathbf{I}_{M} & \sqrt{c_k} \mathbf{I}_{M} \\ \sqrt{c_k} \mathbf{I}_{M} & (1+ c_k) \mathbf{I}_{M} \end{smallmatrix} \bigr] \right)
$
and $f(\mathbf{h}^e_k)  \sim \mathcal{CN}(\mathbf{0}_{M} , \mathbf{I}_{M})$.
\else
\begin{align*}
f(& \mathbf{h}_k, \mathbf{y}_k  | \mathbf{h}^e_k  ; w_k ) \\
  &\sim \mathcal{CN}\left( \bigl[\begin{smallmatrix} \mathbf{0}_M & w_k\sqrt{c_k} \mathbf{h}^e_k \end{smallmatrix} \bigr]  , \bigl[\begin{smallmatrix} \mathbf{I}_{M} & \sqrt{c_k} \mathbf{I}_{M} \\ \sqrt{c_k} \mathbf{I}_{M} & (1+ c_k) \mathbf{I}_{M} \end{smallmatrix} \bigr] \right)
\end{align*}
and $f(\mathbf{h}^e_k)  \sim \mathcal{CN}(\mathbf{0}_{M} , \mathbf{I}_{M})$.
\fi

Then, the conditional distribution of $\mathbf{h}^e_k$ given $\mathbf{h}_k $ and $ \mathbf{y}_k$, $f (\mathbf{h}^e_k | \mathbf{h}_k , \mathbf{y}_k ; w_k ) $, becomes also the multivariate normal distribution that can be directly obtained from $f (\mathbf{h}^e_k , \mathbf{h}_k , \mathbf{y}_k ; w_k )$ \cite{Eaton}.
After some manipulations, we have
$f (\mathbf{h}^e_k | \mathbf{h}_k , \mathbf{y}_k ; w_k ) \sim \mathcal{CN} (\boldsymbol{\mu}, \boldsymbol{\Sigma})$, where $\boldsymbol{\mu} = \frac{w_k \sqrt{c_k} }{1+ w_k^2 c_k } \left( \mathbf{y}_k - \sqrt{c_k} \mathbf{h}_k \right) $ and $\boldsymbol{\Sigma} = \frac{1  }{  1+ w^2_k c_k  }\mathbf{I}_M$.
Then, since $g^e_k = \frac{(\mathbf{h}^e_k)^T \mathbf{a}_k}{\sqrt{M}}$, we can derive $f (g^e_k | \mathbf{h}_k , \mathbf{y}_k ; w_k )$ from the transformation of the random vector as follows:   
\ifCLASSOPTIONonecolumn
\begin{align*}
	f (g^e_k | \mathbf{h}_k , \mathbf{y}_k ; w_k ) \sim \mathcal{CN} \left( \frac{w_k\sqrt{c_k}}{1+ w^2_k c_k } \left( \frac{\zeta_k}{\sqrt{M}} - \sqrt{c_k} g_k \right) , \frac{1}{( 1+ w^2_k c_k) M} \right).
\end{align*}
\else
\begin{align*}
	f(& g^e_k| \mathbf{h}_k , \mathbf{y}_k ; w_k ) \nonumber \\
	&\sim \mathcal{CN} \left( \frac{w_k\sqrt{c_k}}{1+ w^2_k c_k } \left( \frac{\zeta_k}{\sqrt{M}} - \sqrt{c_k} g_k \right) , \frac{1}{( 1+ w^2_k c_k) M} \right).
\end{align*}
\fi
Note that, by substituting \eqref{eq:estimate_h_k} into \eqref{eq:MF_precoder}, we have $\mathbf{a}_k  = \frac{\mathbf{y}_k}{\zeta_k}$, which provides  
$f (g^e_k | \mathbf{h}_k , \mathbf{y}_k ; w_k ) = f \left( g^e_k | \mathbf{h}_k  , \mathbf{a}_k,  \mathbf{y}_k, \zeta_k ; w_k \right) = f \left( g^e_k | \mathbf{h}_k  , \mathbf{a}_k,  g_k, \mathbf{y}_k, \zeta_k ; w_k \right)$. 
Then, we finally have $f (g^e_k | \mathbf{h}_k , \mathbf{y}_k ; w_k ) = f (g^e_k | g_k , \zeta_k ; w_k )$ since $f (g^e_k | \mathbf{h}_k , \mathbf{y}_k ; w_k )$ is characterized by $ g_k $ and $ \zeta_k$.
\section{}\label{App:MMSE_gk}
The MMSE estimator for $g_k$ is  given by 
$\hat{g}_{k} 
	= \mathbb{E} \left[ g_k | \mathbf{r}_k , \mathbf{q}_k , \zeta_k ; w_k \right] 
	= \int  g_{k} f\left(g_{k} | \mathbf{r}_k , \mathbf{q}_k , \zeta_k ; w_k \right) d g_k . 
$
Thus, the mean value of $f\left(g_{k} | \mathbf{r}_k , \mathbf{q}_k , \zeta_k ; w_k \right)$, denoted by $\mu_{\hat{g}_k}$, is the MMSE estimator for $g_k$. 
The distribution of $f\left(g_{k} | \mathbf{r}_k , \mathbf{q}_k , \zeta_k ; w_k \right)$ can be derived as follow: 
\ifCLASSOPTIONonecolumn 
\begin{align} \label{eq:AppE1}
	f\left(g_{k} | \mathbf{r}_k , \mathbf{q}_k , \zeta_k ; w_k \right)
	= \frac{f\left(g_{k} , \mathbf{r}_k | \mathbf{q}_k , \zeta_k ; w_k \right)}{f\left(\mathbf{r}_k | \mathbf{q}_k , \zeta_k ; w_k \right)}
	= \frac{ f\left(\mathbf{r}_k | g_k,  \mathbf{q}_k ;w_k \right) f\left( g_k | \zeta_k ; w_k \right)  }{\int f\left(\mathbf{r}_k | g_k,  \mathbf{q}_k  ;w_k \right) f\left( g_k | \zeta_k ; w_k \right) d g_k}  .
\end{align}
\else
\begin{align} \label{eq:AppE1}
	&f\left(g_{k} | \mathbf{r}_k , \mathbf{q}_k , \zeta_k ; w_k \right) \nonumber\\
	&= \frac{f\left(g_{k} , \mathbf{r}_k | \mathbf{q}_k , \zeta_k ; w_k \right)}{f\left(\mathbf{r}_k | \mathbf{q}_k , \zeta_k ; w_k \right)} \nonumber\\
	&= \frac{ f\left(\mathbf{r}_k | g_k,  \mathbf{q}_k  ; w_k \right) f\left( g_k | \zeta_k ; w_k \right)  }{\int f\left(\mathbf{r}_k | g_k,  \mathbf{q}_k  ; w_k \right) f\left( g_k | \zeta_k ; w_k \right) d g_k}  .
\end{align}
\fi
The pdf of $f\left( g_k | \zeta_k ; w_k \right)$ in \eqref{eq:AppE1} is derived in Appendix \ref{App:pdfs}, while the pdf of $f\left(\mathbf{r}_k | g_k,  \mathbf{q}_k  ; w_k \right)$ can be  obtained from \eqref{eq:r_Bob2}, which is given by 
$	f\left(\mathbf{r}_k | g_k,  \mathbf{q}_k  ; w_k \right) \sim \mathcal{CN} \left( g_k  \mathbf{q}_k , \sigma_{n_k}^2 \mathbf{I}_{N_d}\right).
$ 
Then, after some  manipulations, we finally have the pdf of $f\left(g_{k} | \mathbf{r}_k , \mathbf{q}_k , \zeta_k ; w_k \right) \sim \mathcal{CN} (\mu_{\hat{g}_k} , \sigma^2_{\hat{g}_k})$, where 
\begin{align*}
	\mu_{\hat{g}_k}  =  \frac{\mathbf{r}_k^\dag \mathbf{q}_k + \sigma^2_{n_k} \mu_{g_k}/ \sigma^2_{g_k} }{\mathbf{q}_k^\dag \mathbf{q}_k + \sigma^2_{n_k}/ \sigma^2_{g_k} } 
	\text{ and }
	\sigma^2_{\hat{g}_k} =  \frac{\sigma^2_{n_k} \sigma^2_{g_k} }{\mathbf{q}_k^\dag \mathbf{q}_k \sigma^2_{g_k}+ \sigma^2_{n_k} } .
\end{align*}
Thus, the MMSE estimator for $g_k$ is given by $\hat{g}_{k}  = \frac{\mathbf{r}_k^\dag \mathbf{q}_k + \sigma^2_{n_k} \mu_{g_k}/ \sigma^2_{g_k} }{\mathbf{q}_k^\dag \mathbf{q}_k + \sigma^2_{n_k}/ \sigma^2_{g_k} } $.
In the same manner, we can derive  the pdf of $f\left(g^e_k | \mathbf{r}_k , \mathbf{q}_k , \zeta_k ; w_k \right) \sim \mathcal{CN} (\mu_{\hat{g}^e_k} , \sigma^2_{\hat{g}^e_k})$, where 
\begin{align*}
	\mu_{\hat{g}^e_k} & =  \frac{w_k c_k}{1+ w^2_k c_k} \left(\frac{\zeta_k}{\sqrt{c_k M}}-\frac{\mathbf{r}_k^\dag \mathbf{q}_k + \sigma^2_{n_k} \mu_{g_k}/ \sigma^2_{g_k} }{\mathbf{q}_k^\dag \mathbf{q}_k + \sigma^2_{n_k}/ \sigma^2_{g_k} }  \right) \text{ and }\\
	\sigma^2_{\hat{g}^e_k} & = \frac{1}{(1+w^2_k c_k)M} + \left(\frac{w_k c_k}{1+w^2_k c_k}\right)^2 \frac{\sigma^2_{g_k}\sigma^2_{n_k}}{\sigma^2_{g_k} \mathbf{q}^\dag_k \mathbf{q}_k+\sigma^2_{n_k}} .
\end{align*}

\bibliographystyle{IEEEtran}
\bibliography{[Ref].bib}
\end{document}